%
%
%
%

\documentclass[runningheads,orivec]{llncs}

\usepackage{wrapfig}
\usepackage{threeparttable}
\usepackage{amssymb,amsmath,stmaryrd}
\setcounter{tocdepth}{3}
\usepackage{graphicx}
\usepackage[ruled,vlined]{algorithm2e}
\usepackage{url}
\usepackage{color}

\usepackage{xspace}


\setlength{\textwidth}{13.5cm}
\setlength{\textheight}{20.7cm}
\setlength{\oddsidemargin}{1.3cm}
\setlength{\evensidemargin}{1.3cm}
\setlength{\topmargin}{0in}
\usepackage{times}


\usepackage{xspace}



\newtheorem{prop}{Proposition}



\newcommand{\comment}[1]{}

\begin{document}

\mainmatter  

\title{Certification for $\mu$ calculus with winning strategies\\
   {\small ( Presented at the VeriSure workshop associated with CAV'13 in St. Petersburg in July 2013)}
}

\titlerunning{Certification for $\mu$ calculus with winning strategies}
\authorrunning{Martin Hofmann, Harald Rue{\ss}}

\author{
Martin Hofmann\inst{1}
 \and 
Harald Rue{\ss}\inst{2}
} \institute{
Department of Informatics, Ludwig-Maximilians-Universit\"at, M\"{u}nchen, Germany   \\
\and
fortiss An-Institut Technische Universit\"at M\"{u}nchen, D-80805  M\"{u}nchen, Guerickestr. 25, Germany 
}

\maketitle
\vspace*{-0.5cm}
\begin{abstract}
  We define memory-efficient certificates for $\mu$-calculus model
  checking problems based on the well-known correspondence of the
  $\mu$-calculus model checking with winning certain parity games.
  Winning strategies can independently checked, in low polynomial time, by
  observing that there is no reachable strongly connected component in
  the graph of the parity game whose largest priority is odd.  Winning
  strategies are computed by fixpoint iteration following the naive
  semantics of $\mu$-calculus.  We instrument the usual
  fixpoint iteration of $\mu$-calculus model checking so that it produces
  evidence in the form of a winning strategy;
  these winning strategies can be computed in polynomial time in $|S|$ and in 
 space $O(|S|^2  |{\phi}|^2)$,  where $|S|$ is the size of the state space 
and $|\phi|$ the length of the formula $\phi$\@. 
  On the technical level our work can be seen as a new, simpler, and
  immediate constructive proof of the correspondence between
  $\mu$-calculus and parity games.
\end{abstract}

\section{Introduction\label{sec.introduction}}

We develop algorithms for constructing concise certificates for $\mu$-calculus model checking problems and for efficiently 
and independently checking these certificates  by a trustworthy checker. 

These developments may form the underpinning for a sound integration of $\mu$-calculus model checking into a verification system
such as PVS~\cite{owre1992pvs}\@. Using Shankar's kernel of truth~\cite{shankar2010rewriting} approach, which is  based on checking the verification and  
on verifying the checker, certificates are generated using an untrusted implementation of our $\mu$-calculus model checking algorithms, and certificates 
are then checked by  means of an executable PVS function, which itself is verified in a trusted kernel of PVS\@.  
In contrast to  logical integration frameworks based on  computational reflection (e.g. \cite{boutin1997using}) the kernel of truth approach does 
not require proving the correctness of the complete implemention of the verification procedure. 

In this way it should be possible to generate checkable certificates for the bisimulation between programs and for model checking  problems for both 
linear time  temporal logics and computation tree logics~\cite{emerson1993fragments} as the basis for assurance cases and certification arguments 
for safety-critical systems. 
Moreover, certificates for $\mu$-calculus model checking might also be used  a symmetric abstraction-refinement-based  model checking engines
 for the full $\mu$-calculus  based on  refining over-approximations using {\em spurious counterexamples} and relaxing under-approximations 
using {\em dubious witnesses}  along the lines of~\cite{sorea2005dubious,Sorea04}, 
for  sending code together with  proofs of arbitrary safety and liveness properties  properties, which are then checked by code consumers according 
to  the {\em  proof-carrying code}  paradigm of~\cite{necula1997proof}, 
and for synthesizing {\em correct-by-construction} controllers from these certificates along the lines of~\cite{Sorea04}\@.

Our main result is an effective and efficient instrumentation of  the usual fixpoint iteration of $\mu$-calculus model checking~\cite{bradfield200712} 
for generating certificates that are independently checkable in low polynomial time.  This construction builds on the well-known correspondence of 
model checking for the $\mu$-calculus with winning certain parity games by Emerson and Jutla~\cite{emerson1991tree}\@. 
Parity games are equivalent via linear time reductions to the problem of $\mu$ calculus model checking~(e.g. \cite{graedel2011}), 
Determinacy of parity games follows directly from Martin's most general result on the determinacy of Borel games~\cite{martin1975borel}\@.
Players of parity games may restrict themselves to memoryless strategies; this also implies that for each vertex one of the players has a winning strategy, so 
there are no draws. Algorithms for generating witnesses for players of parity games and their complexity are described in~\cite{judzinskil2011}\@. 

There have been many results and algorithms for constructing witnesses and counterexamples of
various forms for different sublogics, including $\mathit{LTL}$, $\mathit{ACTL}$,  $\mathit{CTL}$, $\mathit{CTL}^*$,  
or the $\mu$-calculus~\cite{clarke1995efficient,bierezhuclarke1999,peled2001falsification,tancleaveland2002,clarke2002tree,shankarsorea2003,gurfinkel2003proof}\@.
Local model checking procedures for determining whether finite-state systems have properties expressible in the $\mu$-calculus
 incrementally construct tableau proofs~\cite{winskel1989local,stirling1989local,cleaveland1990tableau}, which can be 
proof-checked independently. The size of the constructed tableaux can be exponential in the number of states 
of the model. Based on the tableau method of local $\mu$-calculus model checking, Kick~\cite{kick1995generation}
proposes an optimized construction by identifying isomorphic subproofs.
Namjoshi~\cite{namjoshi2001certifying} introduced the notion of {\em certifying model checker} that can generate
independently checkable witnesses for properties verified by a model checker. He defines witnesses for properties of
labelled transition systems expressed in the $\mu$-calculus based on parity games over alternating tree automata.
His technical developments rely on $\mu$-calculus signatures~\cite{streettemerson94} for termination,
and exploits the correspondence between $\mu$-calculus model checking with winning parity 
games~\cite{emerson1991tree}\@.  

In contrast to the above methods, the witnesses generated by our global model checking algorithm are rather small,  as they can be 
represented in space in  $O(|S|^2  |{\phi}|^2)$,  where $|S|$ is the size of the state space and $|\phi|$ is the length of the  formula $\phi$\@. 
On the technical level our work can be seen as a new, simpler, and immediately constructive proof of the correspondence between
 $\mu$-calculus and parity games. Winning strategies are computed by fixpoint iteration following the naive semantics of $\mu$-calculus. No
 complex auxiliary devices such as signatures \cite{streettemerson94} or alternating automata \cite{emerson1991tree} are needed.

The paper is structured as follows. In Sections~\ref{sec:background} and~\ref{sec:parity.games} we are reviewing standard developments
 for the $\mu$-calculus to keep the paper as self-contained as possible. The definition of the nesting depth in Section~\ref{sec:background}, however,
is non-standard and central to the technical developments in this paper. The low polynomial-time checker for certificates in Section~\ref{sec:parity.games} 
is inspired by the standard algorithm for nonemptiness of Street automata. Our constructive proof of the correspondence between $\mu$ calculus model checking
and winning parity games forms the basis of our main contribution, namely the instrumentation of the global model checking iteration for producing
memory-efficient certificates in the form of winning strategies.

\section{Syntax and Semantics\label{sec:background}}

\newcommand{\XX}{{\mathcal{X}}}
\newcommand{\PP}{{\mathcal{P}}}
\newcommand{\Action}{{\mathcal{A}}}
\newcommand{\Land}{\wedge}
\newcommand{\Lor}{\vee}

\newcommand{\Mu}[2]{\mu #1.\,#2}
\newcommand{\Nu}[2]{\nu #1.\,#2}
\newcommand{\must}[1]{[#1]}
\newcommand{\may}[1]{\langle #1\rangle}
\newcommand{\Some}[2]{\may{#1}#2}
\newcommand{\All}[2]{\must{#1}#2}
\newcommand{\Q}{{\mathit{Q}}}
\newcommand{\M}{{\mathit{M}}}

\newcommand{\FV}{{\mathit{FV}}}
\newcommand{\nd}{{\mathit{nd}}}

\newcommand{\cross}{{\times}}

\newcommand{\Pow}[1]{2^{#1}}

\newcommand{\Sem}[2]{\llbracket{#1}\rrbracket{#2}}

\newcommand{\Nat}{\mathbb{N}}
\newcommand{\Dom}{\mathit{dom}}

\newcommand{\Image}[2]{#1(#2)}
\newcommand{\Pre}[2]{\mathit{pre}(#1)(#2)}
\newcommand{\Pretilde}[2]{\widetilde{\mathit{pre}}(#1)(#2)}
\newcommand{\Lfp}[1]{\mathit{lfp}(#1)}
\newcommand{\Gfp}[1]{\mathit{gfp}(#1)}

\newcommand{\T}[1]{\stackrel{#1}{\longrightarrow}}

We are assuming variables $X\in\XX$, propositions $P \in \PP$, and actions $a \in \Action$.
\begin{definition} 
The set of $\mu$-calculus formulae $\phi$ is
given by the grammar
$$
\phi ~::=~      X
            ~|~  P
            ~|~ \Some{a}{\phi} 
            ~|~ \All{a}{\phi}
            ~|~ \phi_1\Land \phi_2 
            ~|~ \phi_1\Lor\phi_2 
            ~|~ \Mu{X}{\phi}
            ~|~ \Nu{X}{\phi}
$$
\end{definition}
The set of free variables $\FV(\phi) \subseteq  \XX$ ,
the size $|\phi|$ of a formula, and
the substitution $\phi[Z:=\psi]$  of formula $\psi$ for any free occurrence $Z \in \FV(\phi)$
 are defined in the usual way.

The notations 
   $\Q \in \{\mu,\nu\}$, 
   $\M \in \{\must{a}, \may{a}\,|\, a \in \Action\}$,
   $* \in \{\Land, \Lor\}$
are used  to simplify inductive definitions.
We define the nesting depth $\nd(QX.\phi)$ of a fixpoint formula as one plus the maximal nesting depth---recursively---of all fixpoint formulas
encountered until any free occurrence of $X$ in $\phi$\@. 
Formally,
 \begin{eqnarray*}
   \nd(X,\phi)&=& 0, \mbox{~if $X\not\in\FV(\phi)$, otherwise:}\\
   \nd(X,\,X)                   &=& 0 \\
   \nd(X,\,\phi_1*\phi_2)   &=& \mathit{max}(\nd(X,\phi_1),\,\nd(X,\phi_2))\\
   \nd(X,\,\M\phi)            &=& \nd(X,\phi) \\
   \nd(X,\Q Y.\,\phi)         &=& max(\nd(\Q Y.\,\phi),\nd(X,\,\phi)) \\
   \nd(\Q X,\phi) &=& 1+\nd(X,\phi)  \\
   \nd(\phi)          &=& 0, \mathit{otherwise.}
  \end{eqnarray*}
For example, $\nd(\Q W.Y)=1$ so $\nd(\Q Y.X \Land \Q W.Y)=2$ and 
$\nd(\Q X.\Q Y.X \Land \Q W.Y)=3$, however $\nd(X, \Q Y.X \Land \Q W.Y)=1$. 

The salient property of the nesting depth is summarised by the
following lemma which is easily proved by induction.
\begin{lemma}\label{nedele}
 Let $\phi = \Q X.\psi_1[Z:=\Q Y. \psi_2]$ where $X \in  \FV(\psi_2)$ and $Z\in\FV(\psi_1)$;     then $ \nd(\Q Y. \psi_2) < \nd(\phi) $\@. 
\end{lemma}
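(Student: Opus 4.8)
The plan is to unfold the outermost fixpoint and reduce the strict inequality to a non-strict statement about the auxiliary two-place function $\nd(X,\cdot)$. Writing $\chi = \psi_1[Z:=\Q Y.\psi_2]$, the defining clause $\nd(\Q X.\chi) = 1 + \nd(X,\chi)$ gives $\nd(\phi) = 1 + \nd(X, \psi_1[Z:=\Q Y.\psi_2])$, so it suffices to show $\nd(X, \psi_1[Z:=\Q Y.\psi_2]) \ge \nd(\Q Y.\psi_2)$; the extra $+1$ then turns this into the desired strict inequality. I would obtain it by chaining two estimates, $\nd(X, \psi_1[Z:=\Q Y.\psi_2]) \ge \nd(X, \Q Y.\psi_2) \ge \nd(\Q Y.\psi_2)$.

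The substantive step is the first estimate, which I would isolate as a substitution lemma and prove by structural induction on $\psi_1$: \emph{if $Z \in \FV(\psi_1)$ and $X \in \FV(\Q Y.\psi_2)$, then $\nd(X, \psi_1[Z:=\Q Y.\psi_2]) \ge \nd(X, \Q Y.\psi_2)$.} Since $Z \in \FV(\psi_1)$, the only atomic case compatible with the hypothesis is $\psi_1 = Z$, where the substitution replaces $\psi_1$ outright by $\Q Y.\psi_2$ and the inequality holds with equality. In the inductive cases one pushes the substitution inward and applies the matching defining clause of $\nd$: for $\psi_1 = \phi_1 * \phi_2$ the hypothesis forces $Z$ free in at least one immediate subformula, and the clause $\nd(X, \phi_1 * \phi_2) = \max(\nd(X,\phi_1), \nd(X,\phi_2))$ together with the induction hypothesis on that subformula suffices; the modal case follows at once from $\nd(X, \M\phi)=\nd(X,\phi)$; and for $\psi_1 = \Q' W.\phi$ we have $Z \in \FV(\phi)$ and $W \ne Z$, so $\psi_1[Z:=\Q Y.\psi_2] = \Q' W.(\phi[Z:=\Q Y.\psi_2])$, and the clause $\nd(X, \Q' W.\cdot) = \max(\nd(\Q' W.\cdot), \nd(X,\cdot))$ bounds this below by $\nd(X, \phi[Z:=\Q Y.\psi_2])$, to which the induction hypothesis applies.

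The second estimate is immediate from the same fixpoint clause for $\nd(X,\cdot)$: as $X \in \FV(\psi_2)$ and $X \ne Y$ we have $X \in \FV(\Q Y.\psi_2)$, whence $\nd(X, \Q Y.\psi_2) = \max(\nd(\Q Y.\psi_2), \nd(X,\psi_2)) \ge \nd(\Q Y.\psi_2)$. Chaining the two estimates and adding the $+1$ contributed by the outer fixpoint completes the proof.

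I expect the only real obstacle to lie in the substitution lemma, and there specifically in the variable bookkeeping rather than in the arithmetic. I must keep track that $X$ genuinely remains free in each subformula I recurse into after substitution---otherwise the clause $\nd(X,\psi)=0$ for $X\notin\FV(\psi)$ would collapse the relevant quantities to $0$ and the inequality could fail---and this is exactly what the standing hypotheses $Z\in\FV(\psi_1)$ and $X\in\FV(\Q Y.\psi_2)$ guarantee. I also rely on the usual hygiene convention that bound variables are pairwise distinct and chosen fresh, so that the binder $\Q Y$ and every binder $\Q' W$ crossed during the induction differs from $X$ and no free occurrence of $X$ inside $\Q Y.\psi_2$ is captured by the substitution. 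With these conditions maintained, each inductive case reduces to a single application of the corresponding clause in the definition of $\nd$.
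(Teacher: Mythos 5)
Your proof is correct, and it matches the paper's approach: the paper simply asserts that the lemma ``is easily proved by induction'' and omits all details, and your argument is exactly the structural induction it has in mind, with the unfolding $\nd(\phi)=1+\nd(X,\psi_1[Z:=\Q Y.\psi_2])$, the substitution estimate, and the clause $\nd(X,\Q Y.\psi_2)=\max(\nd(\Q Y.\psi_2),\nd(X,\psi_2))$ doing the work. Your explicit attention to the hygiene conditions ensuring $X$ remains free after substitution (so the non-degenerate clauses of $\nd(X,\cdot)$ apply) is precisely the bookkeeping the paper leaves implicit.
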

Thus, if we travel down from a fixpoint quantifier to an occurrence of
its bound variable then all the fixpoint quantifiers encountered on
the way in the abstract syntax tree have strictly smaller nesting depth.

The semantics of $\mu$-calculus formulae is given in terms of labelled
transition systems (LTS), consisting of a nonempty set of states $S$,
and a family of total relations $\stackrel{a}{\longrightarrow} ~\in~S
\cross S$ for each action $a \in \Action$\@ and, finally, an
assignment $T \in S \rightarrow \Pow{\PP}$ which tells for each state
$s$ which atomic propositions $P\in\PP$ are true in that state. If $T$
is an LTS, we use $|T|$ for its set of states;
$\stackrel{a}{\longrightarrow}_T$ or simply
$\stackrel{a}{\longrightarrow}$ for its transition relation and $T$
itself for its interpretation of atomic propositions.

Fix a transition system $T$ and put $S=|T|$\@. 
For $\eta$ is a finite partial function from $\XX$ to $S$ with $\FV(\phi) \subseteq \mathit{dom}(\eta)$ we define $\Sem{\phi}{\eta} \subseteq S$ by
\begin{eqnarray*}
\Sem{P}{\eta}                              &=&    \{s \,|\, P \in T(s) \} \\
\Sem{X}{\eta}                              &=&    \eta(X) \\
\Sem{\phi_1 \Lor \phi_2}{\eta}    &=&    \Sem{\phi_1}{\eta} \cup \Sem{\phi_2}{\eta} \\
\Sem{\phi_1 \Land \phi_2}{\eta} &=&    \Sem{\phi_1}{\eta} \cap \Sem{\phi_2}{\eta} \\
\Sem{\may{a}\phi}{\eta}             &=&    \Pre{\T{a}}{\Sem{\phi}{\eta}} \\
\Sem{\must{a}\phi}{\eta}            &=&    \Pretilde{\T{a}}{\Sem{\phi}{\eta}}       \\
\Sem{\mu X.\phi}{\eta}                &=&    \Lfp{U \mapsto  \Sem{\phi}{\eta[X := U]}}    \\
\Sem{\nu X.\phi}{\eta}   &=&      \Gfp{U \mapsto  \Sem{\phi}{\eta[X := U]}} \\
\end{eqnarray*}
The sets $\Pre{\T{a}}{\Sem{\phi}{\eta}}$ and $\Pretilde{\T{a}}{\Sem{\phi}{\eta}}$ respectively denote the {\em preimage} and the {\em weakest precondition}
 of the set $\Sem{\phi}{\eta}$ with respect to the binary relation $\T{a}$; formally:
       \begin{eqnarray*}  
        s \in \Pre{\T{a}}{\Sem{\phi}{\eta}}  & \mathit{iff} &  \exists t \in S.~ s \T{a} t   ~\mathit{and}~ t \in  \Sem{\phi}{\eta} \\
       s \in \Pretilde{\T{a}}{\Sem{\phi}{\eta}}  & \mathit{iff} &  \forall t \in S.~ s \T{a} t   ~\mathit{implies}~ t \in  \Sem{\phi}{\eta} \\
       \end{eqnarray*}
Given the functional  $F(U) = \Sem{\phi}{\eta[X := U]}$, $\Lfp{F}$ and $\Gfp{F}$respectively  denote the least and the greatest,
with respect to the subset ordering on $\Pow{S}$, fixpoints of $F$\@.  These fixpoints exist, since $F$ is monotone. 
\begin{proposition}
    $\Sem{\Q X.\phi}{\eta} = \Sem{\phi[X:=\Q X.\phi]}{\eta}$\@.
\end{proposition}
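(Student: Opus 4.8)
The plan is to reduce the statement to two facts: that the denotation $\Sem{\Q X.\phi}{\eta}$ is, by its very definition, a fixpoint of the functional $F(U) = \Sem{\phi}{\eta[X:=U]}$, and a \emph{substitution lemma} relating syntactic substitution to environment update. Everything then follows by a one-line calculation, so the real content is packaged into the substitution lemma.

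First I would establish the substitution lemma: for every formula $\phi$, every $\psi$, every variable $X$, and every $\eta$ whose domain contains the free variables of $\phi[X:=\psi]$,
$$\Sem{\phi[X:=\psi]}{\eta} = \Sem{\phi}{\eta[X:=\Sem{\psi}{\eta}]}.$$
This is proved by structural induction on $\phi$, following the clauses of the semantics. The base cases ($X$ itself, a distinct variable $Y$, a proposition $P$), the Boolean cases, and the modal cases $\M\phi$ are immediate from the induction hypothesis together with the fact that capture-avoiding substitution commutes with all the connectives.

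Second I would apply this lemma with $\psi := \Q X.\phi$. Since $X \notin \FV(\Q X.\phi)$, writing $U^* := \Sem{\Q X.\phi}{\eta}$ we obtain
$$\Sem{\phi[X:=\Q X.\phi]}{\eta} = \Sem{\phi}{\eta[X := U^*]} = F(U^*).$$
By definition $U^*$ is $\Lfp{F}$ when $\Q=\mu$ and $\Gfp{F}$ when $\Q=\nu$, and by the Knaster--Tarski theorem---applicable because $F$ is monotone, as already noted---these are genuine fixpoints, so $F(U^*) = U^*$. Combining the two displays yields $\Sem{\phi[X:=\Q X.\phi]}{\eta} = \Sem{\Q X.\phi}{\eta}$, which is exactly the claim.

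The main obstacle is the fixpoint case of the substitution lemma, namely showing $\Sem{(\Q Y.\phi')[X:=\psi]}{\eta} = \Sem{\Q Y.\phi'}{\eta[X:=\Sem{\psi}{\eta}]}$. Under the capture-avoiding convention we may assume $Y \neq X$ and $Y \notin \FV(\psi)$, so the substitution pushes inside the binder. Both sides are then least (resp.\ greatest) fixpoints of functionals that the induction hypothesis shows to be pointwise equal, once one commutes the two independent environment updates, $\eta[Y:=V][X:=\Sem{\psi}{\eta}] = \eta[X:=\Sem{\psi}{\eta}][Y:=V]$, and invokes a \emph{coincidence lemma} asserting that $\Sem{\psi}{\eta[Y:=V]} = \Sem{\psi}{\eta}$ whenever $Y \notin \FV(\psi)$; equal functionals have equal fixpoints. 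Establishing this coincidence lemma---itself a routine structural induction stating that $\Sem{\cdot}{\cdot}$ depends only on the restriction of $\eta$ to the free variables---is the only genuinely technical ingredient, and it is exactly what makes the fixpoint step go through cleanly.
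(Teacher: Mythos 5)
Your proof is correct. The paper states this unfolding proposition without any proof, treating it as standard, so there is no authorial argument to diverge from; your route---a substitution lemma $\Sem{\phi[X:=\psi]}{\eta} = \Sem{\phi}{\eta[X:=\Sem{\psi}{\eta}]}$ proved by structural induction, with the coincidence lemma and the commutation of independent environment updates handling the binder case, followed by the Knaster--Tarski observation that $\Sem{\Q X.\phi}{\eta}$ is a genuine fixpoint of the monotone functional $F$---is exactly the canonical argument the paper implicitly relies on, and every step checks out, including your correct identification of the fixpoint case of the induction as the only point of real technical content.
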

For the monotonicity of $F$,   $\emptyset \subseteq F(\emptyset) \subseteq F^2(\emptyset) \subseteq  \ldots$ and 
$S \supseteq F(S) \supseteq F^2(S)\supseteq  \ldots $\@.   Moreover, if $S$ is finite then we have 
\begin{eqnarray*}
\Sem{\mu X.\phi}{\eta}  &=& \{s \in S \,| \, \mathit{exists}~ t \leq |S|.\, s \in F^t(\emptyset)\}\mbox{,}\\
\Sem{\nu X.\phi}{\eta}  &=& \{s \in S  \,|\, \mathit{forall}~t \leq |S|.\, s \in F^t(S)\}\mbox{.}  
\end{eqnarray*}
\newcommand{\ITER}{\mathit{iter}}
\newcommand{\SEM}{\mathit{sem}}
\begin{figure}[t]
\begin{eqnarray*}
\SEM(X,\eta)                             &=& \eta(X) \\
\SEM(\mu X.\phi,\eta)               &=& \ITER_X(\phi,\eta, \emptyset) \\
\SEM(\nu X.\phi,\eta)                &=& \ITER_X(\phi,\eta, S) \\
\SEM(\phi_1 \Land \phi_2,\eta) &=& \SEM(\phi_1,\eta) \cap \SEM(\phi_2,\eta) \\
\SEM(\phi_1 \Lor \phi_2,\eta)   &=& \SEM(\phi_1,\eta) \cup \SEM(\phi_2,\eta) \\
\SEM(\must{a}\phi,\eta)           &=& \Pretilde{\T{a}}{\SEM(\phi,\eta)} \\
\SEM(\may{a}\phi,\eta)            &=& \Pre{\T{a}}{\SEM(\phi,\eta)} \\[2mm]
\ITER_X(\phi,\eta,U)                 &=&  \mathit{if~} U=U_p \mathit{~then~} U \mathit{~else~} \ITER_X(\phi,\eta,U_p) \\
                                                  &  & ~~~\mathit{where}~ U_p\, :=\, \SEM(\phi,\eta[X := U])
\end{eqnarray*}
\caption{Fixpoint iteration for computing the semantics of $\mu$-calculus formulas.\label{fig:fixpoint.iteration}}
\end{figure}
Therefore, in the case $S$ is finite,  the  iterative algorithm in Figure~\ref{fig:fixpoint.iteration}  computes $\Sem{\phi}{\eta}$\@. 
\begin{proposition}
$\Sem{\phi}{\eta} = \SEM(\phi,\eta)$\@. 
\end{proposition}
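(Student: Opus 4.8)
The plan is to proceed by structural induction on $\phi$, with the statement strengthened to range over all environments $\eta$ satisfying $\FV(\phi)\subseteq\Dom(\eta)$. The cases for $X$, $P$, $\phi_1\Land\phi_2$, $\phi_1\Lor\phi_2$, $\may{a}\phi$ and $\must{a}\phi$ are immediate: the defining clause of $\SEM$ in Figure~\ref{fig:fixpoint.iteration} matches syntactically the corresponding clause of $\Sem{\cdot}{\cdot}$, and one simply rewrites each recursive call $\SEM(\psi,\cdot)$ to $\Sem{\psi}{\cdot}$ using the induction hypothesis on the proper subformula $\psi$.

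The interesting cases are $\Mu{X}{\phi}$ and $\Nu{X}{\phi}$, where the work is concentrated in the auxiliary function $\ITER_X$. First I would abbreviate $F(U):=\Sem{\phi}{\eta[X:=U]}$ and observe, using the induction hypothesis applied to $\phi$ for every environment of the form $\eta[X:=U]$, that the functional actually iterated by the algorithm, $U\mapsto\SEM(\phi,\eta[X:=U])$, equals $F$. Consequently the successive values taken by the accumulator in $\ITER_X(\phi,\eta,U_0)$ are exactly $U_0, F(U_0), F^2(U_0),\dots$, and the test $U=U_p$ halts the recursion precisely at the first index $k$ with $F^{k+1}(U_0)=F^k(U_0)$, returning $F^k(U_0)$.

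It then remains to identify this first fixpoint with the required extremal fixpoint. For $\Mu{X}{\phi}$ the accumulator starts at $U_0=\emptyset$, so by monotonicity of $F$ the chain $\emptyset\subseteq F(\emptyset)\subseteq F^2(\emptyset)\subseteq\cdots$ is increasing; since $S$ is finite each strict inclusion adds at least one state, so the chain stabilises at some index $k\le|S|$, which is exactly where $\ITER_X$ terminates. The returned value $F^k(\emptyset)$ therefore coincides with $F^{|S|}(\emptyset)=\bigcup_{t\le|S|}F^t(\emptyset)$, and by the characterisation displayed just before Figure~\ref{fig:fixpoint.iteration} this set equals $\Sem{\Mu{X}{\phi}}{\eta}$. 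The case of $\Nu{X}{\phi}$ is entirely dual, iterating the decreasing chain $S\supseteq F(S)\supseteq\cdots$ down from $U_0=S$ and using the companion characterisation of $\Sem{\Nu{X}{\phi}}{\eta}$.

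The only genuine obstacle is the bookkeeping for $\ITER_X$: one must check that its halting condition $U=U_p$ really captures the first repetition in the Kleene chain and that this repetition is reached within $|S|$ steps. Once the termination index is bounded by $|S|$ and the chain is known to be monotone, the extremality of the reached fixpoint is already delivered by the finite-iteration formulas for $\Sem{\Mu{X}{\phi}}{\eta}$ and $\Sem{\Nu{X}{\phi}}{\eta}$ proved earlier, so no separate appeal to the Knaster--Tarski theorem is needed.
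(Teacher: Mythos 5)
Your proof is correct and takes essentially the paper's own route: the paper states this proposition without a separate proof, treating it as immediate from the finite Kleene-iteration characterizations of $\Sem{\Mu{X}{\phi}}{\eta}$ and $\Sem{\Nu{X}{\phi}}{\eta}$ displayed just before Figure~\ref{fig:fixpoint.iteration}, and your structural induction (with the statement strengthened over all environments $\eta$, and the termination of $\ITER_X$ bounded by $|S|$ via monotonicity) is precisely the standard elaboration of that argument. One cosmetic remark: Figure~\ref{fig:fixpoint.iteration} omits the clause for $\SEM(P,\eta)$, which your base cases implicitly supply.
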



\begin{lemma}\label{lemma:dual}
    $s \not\in \Sem{\phi}{\eta} ~\mathit{iff}~  s \in \Sem{\phi^*}{\eta'}$, 
where 
       $\eta'(X) = S \backslash \eta(X)$ and $\phi^*$ is the dual of $\phi$ given by 
       \begin{eqnarray*}
       (P)^* &=& P \\
       (X)^* &=& X \\
       (\phi_1 \Land \phi_2)^* &=& \phi_1^* \Lor \phi_2^* \\
       (\phi_1 \Lor \phi_2)^* &=& \phi_1^* \Land \phi_2^* \\
       (\must{a}\phi)^* &=& \may{a}\phi^* \\
       (\may{a}\phi)^* &=& \must{a}\phi^* \\
       (\mu X.\phi)^*  &=& \nu X.\phi^* \\
       (\nu X.\phi)^*  &=& \mu X.\phi^* \\
         \end{eqnarray*}
\end{lemma}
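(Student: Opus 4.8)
The plan is to prove the lemma by structural induction on $\phi$, proving the statement simultaneously for all environments $\eta$ with $\FV(\phi)\subseteq\Dom(\eta)$ and all $s \in S$, since the fixpoint cases will force us to apply the induction hypothesis under modified environments. The inductive skeleton is driven by the observation that $(\cdot)^*$ swaps each connective with its De Morgan dual while $\eta' (X) = S\setminus\eta(X)$ complements the variable assignments, so the proof is essentially a bookkeeping argument that the semantic clauses respect these two complementations.

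The leaf and Boolean cases set the pattern. For a variable the claim is immediate from the definition of $\eta'$: since $X^*=X$ we have $s\notin\Sem{X}{\eta}=\eta(X)$ iff $s\in S\setminus\eta(X)=\eta'(X)=\Sem{X}{\eta'}$. The cases $\phi_1\Land\phi_2$ and $\phi_1\Lor\phi_2$ follow from the two induction hypotheses together with the set-theoretic De Morgan laws, which exactly match the swap of $\Land$ and $\Lor$ under $(\cdot)^*$. For the modal cases I would first isolate the pointwise duality between preimage and weakest precondition,
$$ S\setminus \Pre{\T{a}}{A} = \Pretilde{\T{a}}{S\setminus A}, \qquad S\setminus \Pretilde{\T{a}}{A} = \Pre{\T{a}}{S\setminus A}, $$
which is immediate by unfolding the $\exists$/$\forall$ definitions and negating. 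Applying this with $A=\Sem{\phi}{\eta}$, whose complement is $\Sem{\phi^*}{\eta'}$ by the induction hypothesis, discharges $\may{a}\phi$ and $\must{a}\phi$, matching $(\may{a}\phi)^*=\must{a}\phi^*$.

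The fixpoint cases are the substance. For $\Mu{X}{\phi}$ set $F(U)=\Sem{\phi}{\eta[X:=U]}$ and for $\Nu{X}{\phi^*}$ set $G(V)=\Sem{\phi^*}{\eta'[X:=V]}$. Using $(\eta[X:=S\setminus V])'=\eta'[X:=V]$ and the induction hypothesis for $\phi$ applied in the environment $\eta[X:=S\setminus V]$, I would establish the conjugacy relation $G(V)=S\setminus F(S\setminus V)$, i.e.\ $G$ is the De Morgan conjugate of $F$. The key algebraic fact is then that complementation $U\mapsto S\setminus U$ is an order-reversing bijection of $\Pow{S}$ that carries fixpoints of $F$ to fixpoints of $G$ while reversing the ordering; hence it maps the least fixpoint of $F$ to the greatest fixpoint of $G$, giving $S\setminus\Lfp{F}=\Gfp{G}$ (and symmetrically $S\setminus\Gfp{F}=\Lfp{G}$ for the $\nu$ case). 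Therefore $s\notin\Sem{\Mu{X}{\phi}}{\eta}$ iff $s\in S\setminus\Lfp{F}=\Gfp{G}=\Sem{\Nu{X}{\phi^*}}{\eta'}=\Sem{(\Mu{X}{\phi})^*}{\eta'}$, and the $\nu$ case is dual.

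I expect the one genuinely delicate point to be the atomic proposition case, not the fixpoint algebra (which is standard Knaster--Tarski duality). Because $\Sem{P}{\eta}$ is fixed by $T$ and is insensitive to $\eta$, complementing the environment has no effect, so the stated convention $(P)^* = P$ would demand $s\notin\Sem{P}{\eta}$ iff $s\in\Sem{P}{\eta}$, which cannot hold. The proof therefore only goes through if the calculus is read in negation normal form with complementary literals whose duality flips them, so that the base case reduces, like the variable case, to the definition of complement; I would make this convention explicit, as it is the single place where the statement requires the right reading of atomic propositions, whereas all remaining cases are the clean De Morgan and Tarski computations sketched above.
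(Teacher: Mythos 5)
The paper states Lemma~\ref{lemma:dual} without any proof at all --- it is presented as a known duality fact --- so there is no in-paper argument to compare against; your proposal supplies the standard missing one, and it is correct in every case the grammar makes unambiguous. The three load-bearing points are all in order: strengthening the induction to quantify over all environments $\eta$; the identity $(\eta[X:=S\setminus V])' = \eta'[X:=V]$, which together with the induction hypothesis yields the conjugacy $\Sem{\phi^*}{\eta'[X:=V]} = S\setminus \Sem{\phi}{\eta[X:=S\setminus V]}$; and the Knaster--Tarski observation that complementation is an order anti-isomorphism of $\Pow{S}$ carrying fixpoints of $F$ to fixpoints of its conjugate $G$ and hence $S\setminus\Lfp{F}=\Gfp{G}$. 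The modal step $S\setminus\Pre{\T{a}}{A}=\Pretilde{\T{a}}{S\setminus A}$ is likewise right, and as you implicitly note it needs only the $\exists$/$\forall$ negation, not totality of $\T{a}$.

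Your flag on the atomic case is a genuine catch, not pedantry. With $(P)^*=P$ and $\Sem{P}{\eta}$ insensitive to $\eta$, the lemma would assert $s\notin\Sem{P}{\eta}$ iff $s\in\Sem{P}{\eta}$, which is false as soon as some proposition and state exist; the clause as printed is an error in the statement. Your repair --- reading the calculus in negation normal form with complementary literals, so $(P)^*=\bar{P}$ where $\bar{P}\in T(s)$ iff $P\notin T(s)$ (equivalently, complementing the interpretation $T$ alongside $\eta$) --- is the right one, and it is exactly what the paper's own later use demands: in the proof of Corollary~\ref{corollary.game.characterization}, the claim that a proponent winning strategy in $G(T,\eta')$ from $(s,\phi^*)$ is tantamount to an opponent winning strategy in $G(T,\eta)$ from $(s,\phi)$ requires terminal priorities to flip parity under dualisation. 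They do flip at variable positions, since $\Omega(t,X)$ is computed from the complemented $\eta'$, but at positions $(t,P)$ the priority is still computed from the same $T$, so under the literal reading $(P)^*=P$ the correspondence breaks at every play ending in an atomic proposition. Making your convention explicit therefore fixes not just the lemma but the corollary that depends on it.
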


\section{Parity Games\label{sec:parity.games}}

\newcommand{\Pos}{\mathit{Pos}}

A \emph{parity game} is given by the following data:
  \begin{itemize}
  \item a (finite or infinite) set of positions $\Pos$
           partitioned into proponent's (Player 0) and opponent's (Player 1) positions:
           $\Pos = \Pos_0 + \Pos_1$\@;
   \item a total edge relation $\rightarrow   \,\subseteq\, \Pos \cross \Pos$\@;\footnote{ 
             total means forall $p\in \Pos$ there exists $p' \in\Pos$ with  $p \rightarrow p'$\@.}
    \item a function $\Omega \in \Pos \rightarrow \Nat$ with a finite range; 
            we call $\Omega(p)$ the priority of position $p$\@. 
  \end{itemize}
The players move a token along the edge relation $\rightarrow$\@. When the
token is on a position in $\Pos_0$ then proponent decides where to move
next and likewise for opponent.

In order to formalize the notion of ``to decide'' we must introduce
strategies.  Formally, a strategy for a player $i\in {0,1}$ is a
function $\sigma$ that for any nonempty string $\vec p =
p(0)\ldots p(n)$ over $\Pos$ and such that $p(k)\rightarrow p(k+1)$
for $k=0 \ldots n-1$ and $p(n)\in Pos_i$ associates a position
$\sigma(\vec p) \in\Pos$ such that $p(n) \rightarrow
\sigma(\vec p)$\@.

Given a starting position $p$ and strategies $\sigma_0$ and $\sigma_1$ for
the two players one then obtains an infinite sequence of positions (a
``play'') $p(0),p(1),p(2),\ldots$  by 
\begin{eqnarray*}
   p(0) &=& p \\
p(n+1) &=& \sigma_i(p(0) \ldots p(n))   ~~\mathit{where}~ p(n) \in\Pos_i
\end{eqnarray*}
We denote this sequence by $\mathit{play}(p,\sigma_0,\sigma_1)$\@.  

The play is won by proponent (Player 0) if the largest number that
occurs infinitely often in the sequence $\Omega(\mathit{play}(p,\sigma_0,\sigma_1))$
is even and it is won by opponent if that number is odd. Note that
$\Omega(\_)$ is applied component-wise and that a largest priority indeed
exists since $\Omega$ has finite range.

Player $i$ wins from position $p$ if there exists a strategy $\sigma_i$ for
player $i$ such that for any strategy $\sigma_{1-i}$ for the other player
(Player $1-i$) player $i$ wins $\mathit{play}(p,\sigma_0,\sigma_1)$\@.  
We write $W_i$ for the set of positions from which Player $i$ wins.

A strategy $\sigma$ is \emph{positional} if $\sigma(p(0)..p(n))$ only
depends on $p(n)$\@.  Player $i$ \emph{wins positionally} from $p$
when the above strategy $\sigma_i$ can be chosen to be positional.

The following is a standard result.
\begin{theorem}\label{theorem.positional.wins}
Every position $p$ is either in $W_0$ or in $W_1$ and player $i$ wins
positionally from every position in $W_i$\@. 
\end{theorem}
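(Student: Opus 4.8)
The plan is to prove the statement by induction on the number of distinct priorities in the (finite) range of $\Omega$, following the attractor-based argument of Zielonka and McNaughton. The two players are perfectly symmetric: swapping $\Pos_0$ with $\Pos_1$ and replacing $\Omega$ by $\Omega+1$ exchanges the roles of even and odd, and hence of the two players, so it suffices to treat the case where the largest priority $m$ is even (favouring proponent) and to infer the odd case by this duality.

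The central tool is the notion of an $i$-attractor. For $A \subseteq \Pos$ I would define $\mathrm{Attr}_i(A)$ as the set of positions from which Player $i$ can force the token into $A$ in finitely many moves; formally this is the least set containing $A$, containing every $p \in \Pos_i$ with some successor already in the set, and every $p \in \Pos_{1-i}$ all of whose successors are already in the set (built up by a possibly transfinite iteration when $\Pos$ is infinite). Two facts are then routine: Player $i$ has a positional strategy on $\mathrm{Attr}_i(A) \setminus A$ that drives every play into $A$ in finitely many steps, since the rank in the attractor hierarchy strictly decreases; and the complement $\Pos \setminus \mathrm{Attr}_i(A)$ is closed under the moves available to the opponent of $i$, hence induces a genuine subgame.

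For the inductive step, let $m$ be even, let $i=0$, and let $N = \{p : \Omega(p) = m\}$. I would run the Zielonka recursion: set $A = \mathrm{Attr}_0(N)$ and solve the subgame $G' = G \setminus A$, which uses strictly fewer priorities and so by the induction hypothesis is positionally determined with winning regions $W_0'$ and $W_1'$. If $W_1' = \emptyset$, then proponent wins all of $G$ positionally by playing the attractor strategy toward $N$ on $A$ and the inductive strategy on $G'$; the key verification is that any resulting play either eventually stays in $G'$ (where proponent wins) or visits $N$ infinitely often, and in the latter case $m$, being the largest priority overall, is the largest priority seen infinitely often, so proponent wins by evenness. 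If instead $W_1' \neq \emptyset$, I would form $B = \mathrm{Attr}_1(W_1')$, observe that opponent wins positionally from all of $B$ (attractor strategy into $W_1'$ followed by the inductive opponent strategy), and recurse on the strictly smaller subgame $G \setminus B$; this recursion terminates because $\Pos$ shrinks, and it ultimately partitions $\Pos$ into a proponent region and an opponent region, each equipped with a positional winning strategy.

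The main obstacle I anticipate is the gluing argument that certifies the combined positional strategy actually wins along every infinite play: one must argue by a clean case split on whether the play is eventually trapped in a residual subgame, where the induction hypothesis applies verbatim, or returns to the top-priority set $N$ infinitely often, where the evenness of $m$ decides the play, and one must check that the finitely many detours through attractor regions do not affect the largest priority occurring infinitely often. A secondary technical point, needed because the excerpt allows an infinite position set, is to make the attractor construction transfinite and to invoke the well-foundedness of the attractor rank to guarantee that the attractor strategy reaches its target in finitely many steps on every play; alternatively, determinacy in the non-positional sense may be imported from Martin's Borel determinacy theorem, leaving only the positionality of the strategies to be established by the above construction.
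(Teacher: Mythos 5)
The paper itself offers no proof of this theorem: it is stated as ``a standard result'' and implicitly deferred to the literature (Martin's Borel determinacy for determinacy, Emerson--Jutla for memoryless strategies, both cited in the introduction). Your attractor-based McNaughton--Zielonka argument is the canonical self-contained proof of exactly this statement, and its outline is sound: the player symmetry via replacing $\Omega$ by $\Omega+1$, the two attractor facts (a positional forcing strategy on $\mathrm{Attr}_i(A)\setminus A$ with well-founded rank, and closure of the complement making a genuine subgame), the induction on the number of priorities, the case split on whether $W_1'$ is empty, and the gluing argument by cases on whether a play is eventually trapped in the residual subgame or visits the top-priority set $N$ infinitely often. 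One point worth making explicit in the gluing step: a proponent position outside $\mathrm{Attr}_0(N)$ has \emph{all} of its successors outside (otherwise it would itself be in the attractor), so proponent can never be the one to re-enter $\mathrm{Attr}_0(N)$; this is also precisely what makes $W_1'$, computed in the subgame, a trap for proponent in the \emph{full} game, which you use silently when claiming opponent wins from $B=\mathrm{Attr}_1(W_1')$ in $G$ rather than merely in $G'$. You should also specify a (arbitrary) move at positions of $N$ itself, since the attractor strategy is only defined on $\mathrm{Attr}_0(N)\setminus N$.

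The one genuine weak spot is termination of the outer recursion. You justify it ``because $\mathit{Pos}$ shrinks,'' but the paper explicitly allows an infinite position set (only the range of $\Omega$ is finite), and there the argument as stated does not terminate: the $W_1'\neq\emptyset$ branch removes $B$ and recurses on a game with the \emph{same} number of priorities, so the induction hypothesis on priorities does not apply and the shrinking-set argument gives no bound. The standard repair is to make this outer loop transfinite as well: iterate the removal of $B_\alpha=\mathrm{Attr}_1(W_{1,\alpha}')$ along ordinals, taking unions at limit stages, and argue by transfinite induction that the accumulated region is opponent-won (each stage's strategy stays inside its stage, so the union of positional strategies is positional and winning) and that the process stabilizes because each nonempty stage strictly enlarges the removed set; what remains at stabilization satisfies the $W_1'=\emptyset$ case. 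Your remark about transfinite constructions covers only the attractor computation itself, not this outer loop, so as written the proof is complete only for finite arenas; alternatively, one can (as you note) import plain determinacy from Martin and run your construction only to upgrade to positionality, but then the upgrade argument still needs the transfinite outer iteration in the infinite case.
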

In view of this theorem we can now confine attention to positional
strategies. A strategy that wins against all positional strategies is
indeed a winning strategy (against all strategies) since the optimal
counterstrategy is itself positional. 
\begin{example}
  Fig.~\ref{parisp} contains a graphical display of a parity game. 
Positions in $\Pos_0$ and $\Pos_1$ are
  represented as circles and boxes, respectively, and labelled with
  their priorities. Formally, $\Pos=\{a,b,c,d,e,f,g,h,i\}$ and $\Pos_0=\{b,d,f,h\}$ and $\Pos_1=\{a,c,e,g,i\}$ and $\Omega(a)=3$, \dots, and $\rightarrow=\{(a,b), (b,f),\dots\}$. 

  In the right half of Fig.~\ref{parisp} the winning sets are
  indicated and corresponding positional winning strategies are given
  as fat arrows. The moves from positions that are not in the
  respective winning set are omitted but can of course be filled in in
  an arbitrary fashion.
\end{example}
\begin{figure}
\begin{tabular}{cc}
\includegraphics[scale=0.4]{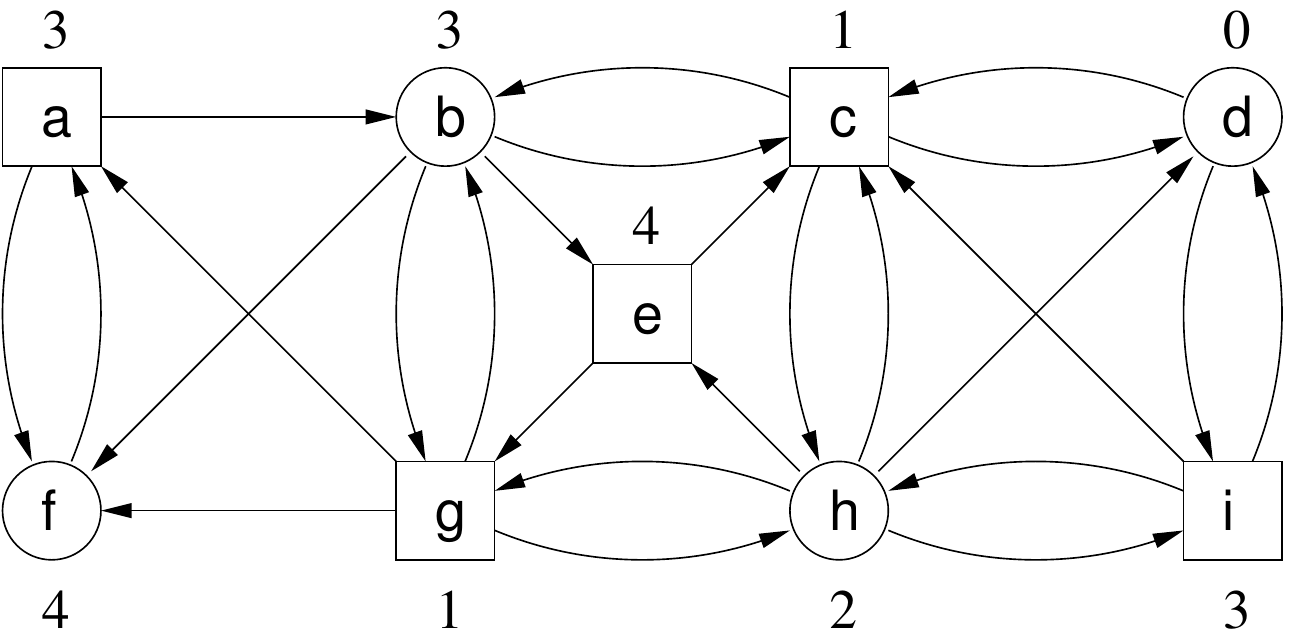} \hspace{3em} & \includegraphics[scale=0.4]{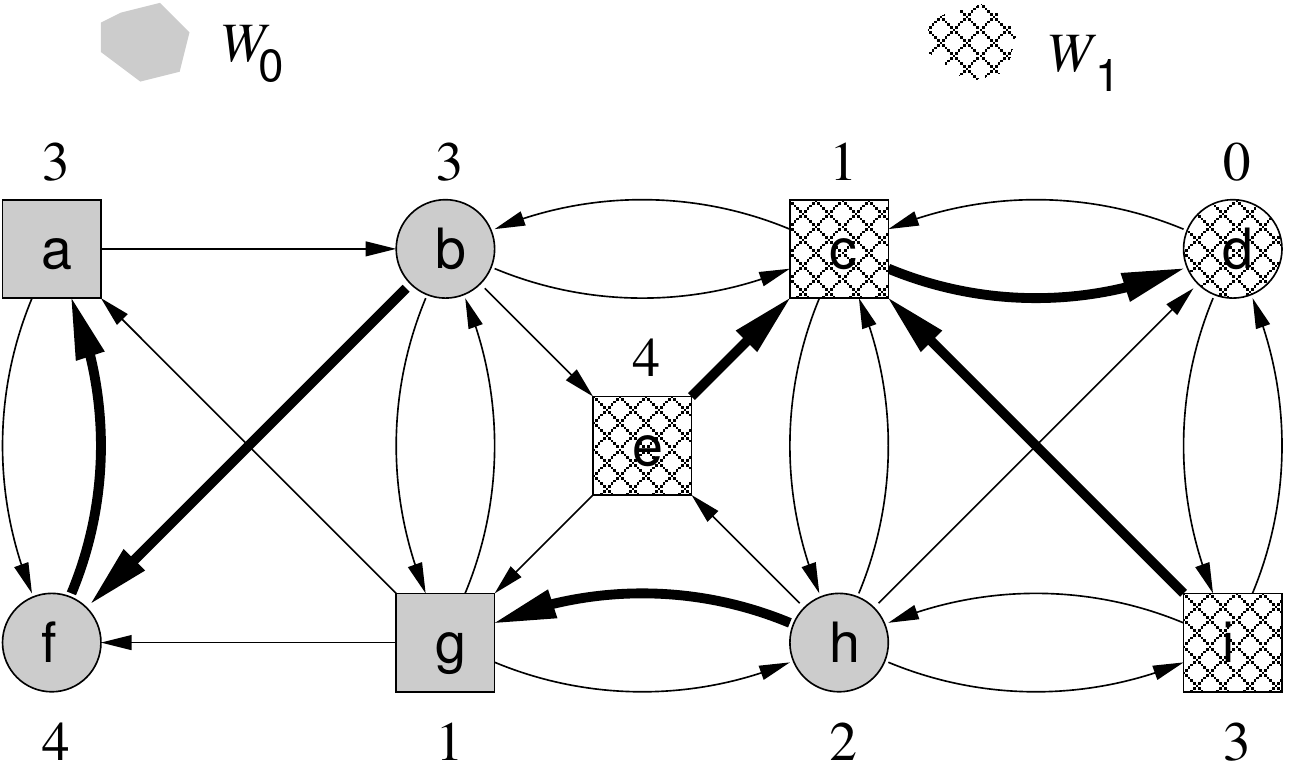} 
\end{tabular}
\caption{\label{parisp}
A parity game and its decomposition into winning sets.}
 \end{figure}
\subsection{Certification of winning strategies}


Given a parity game with finitely many positions, presented explicitly as a finite labelled graph, and 
a partition of $\Pos$ into $V_0$ and $V_1$ we are now looking for an
easy to verify certificate as to the fact that $V_0 = W_0$ and $V_1 = W_1$\@.

In essence, such a certificate will consist of a positional strategy
$\sigma_i$ for each player $i$ such that $i$ wins using $\sigma_i$
from every position $p$ in $V_i$\@. Clearly this implies $V_i=W_i$ and
the above theorem asserts that in principle such certificates always
exist when $V_i=W_i$. However, it remains to explain how we can check
that a given positional strategy $\sigma_i$ wins from a given position
$p$\@.

We first note that for this it is enough that it wins against any
adversarial positional strategy because the ``optimal''
counterstrategy, i.e., the one that wins from all adversarial winning
positions is positional (by theorem~\ref{theorem.positional.wins})\@. 
Thus, given a positional strategy $\sigma_i$ for player $i$ we can remove all edges from
positions $p'\in\Pos_i$ that are not chosen by the strategy and in the
remaining game graph look for a cycle whose largest priority has
parity $1-i$ and is reachable from $p$\@. If there is such a cycle
then the strategy was not good and otherwise it is indeed a winning
strategy for Player $i$\@. 

Algorithmically, the absence of such a cycle can be checked by
starting a depth-first search from every position of adversary
priority (parity $1-i$) after removing all positions of favourable and
higher priority (parity $i$). More efficiently, one can decompose
the reachable (from the purported winning set) part of the remaining
graph into nontrivial strongly connected components (SCC). If such an
SCC only contains positions whose priority has parity $1-i$ then,
clearly, the strategy is bad. Otherwise, one may remove the positions
with the largest priority of parity $i$, decompose the remaining graph
into SCCs and continue recursively. Essentially, this is the standard
algorithm for nonemptiness of Strett automata described in
\cite{DBLP:journals/fmsd/BloemGS06}. This paper also describes an
efficient algorithm for SCC decomposition that could be used here.
\begin{example}
After removing the edges not taken by Player 0 according to his purported winning strategy we obtain the following graph: 

\includegraphics[scale=0.4]{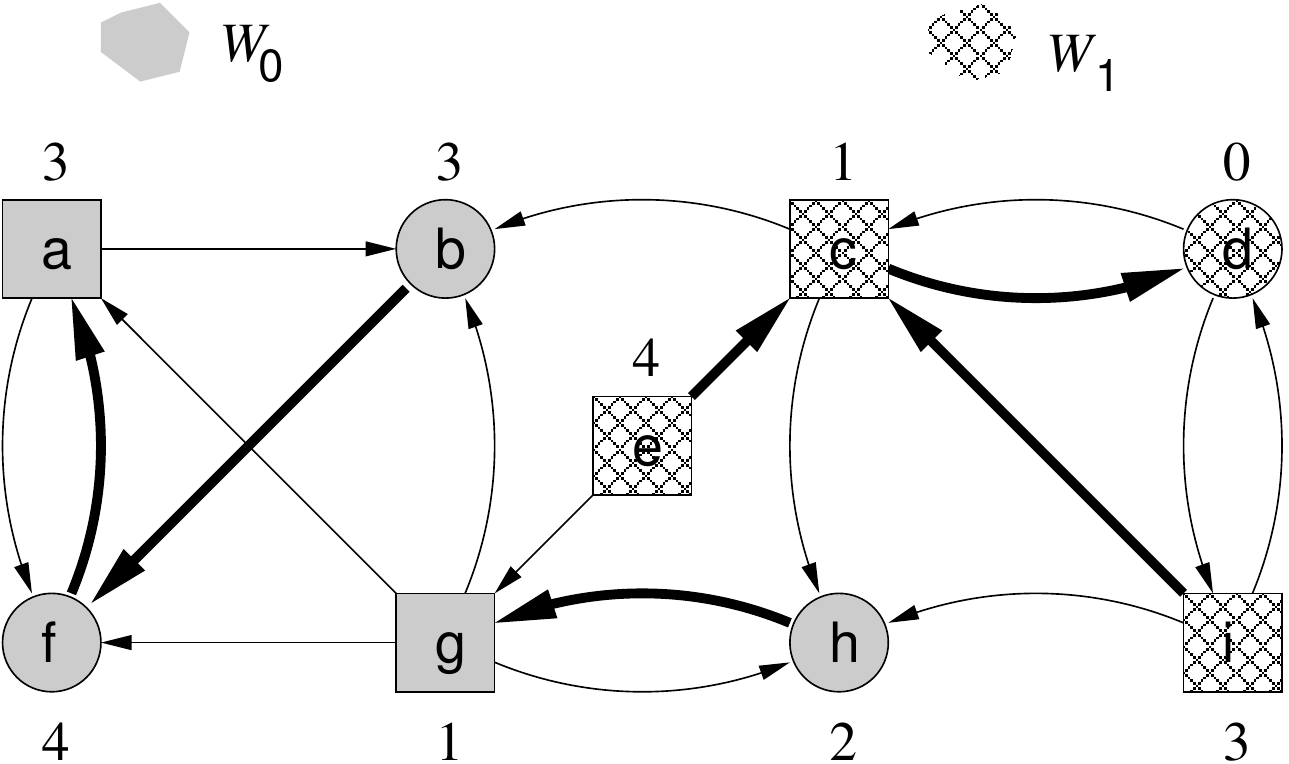}

We see that the two reachable SCC from $W_0$ are $\{a,b,f\}$ and
$\{g,h\}$. The first one contains the cycles $a,f$ and $a,b,f$ which
both have largest priority $4$. The other one is itself a cycle with
largest priority $2$.

Likewise, adopting the viewpoint of Player 1, after removing the edges
not taken by his strategy we obtain

\includegraphics[scale=0.4]{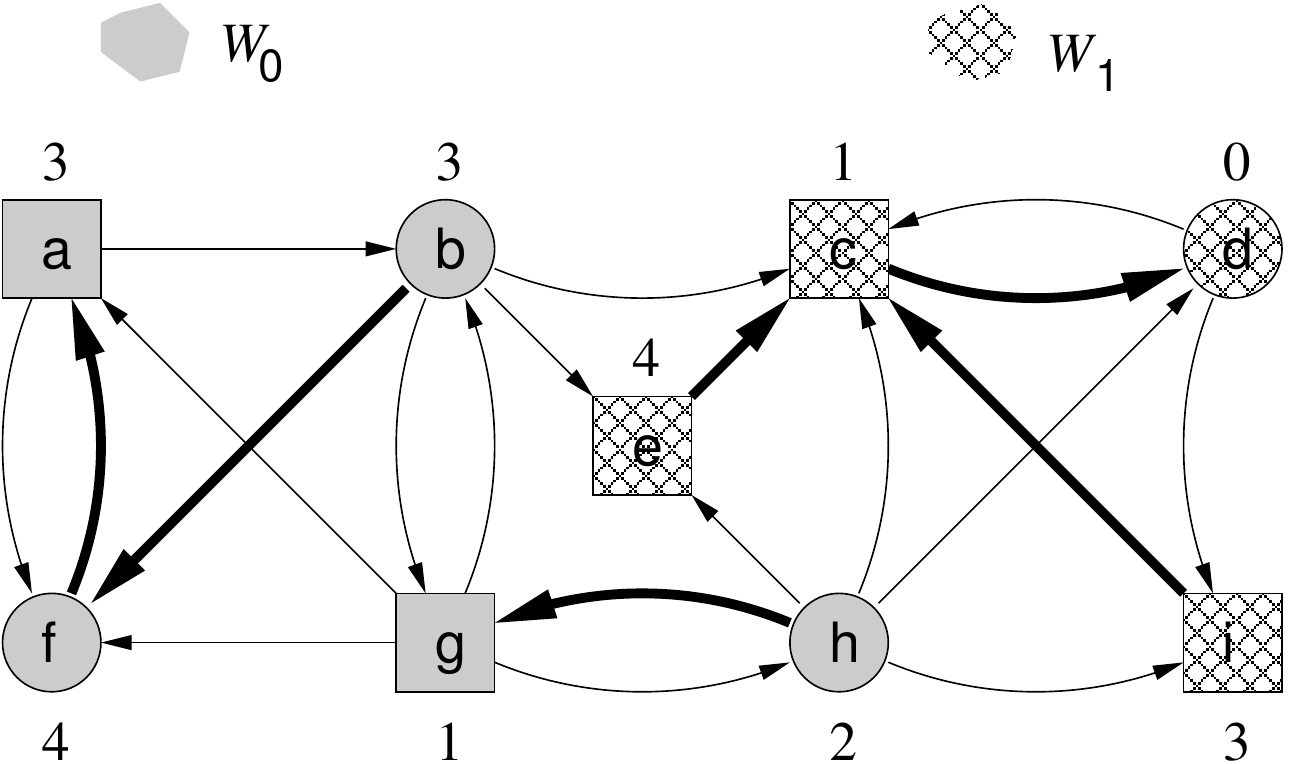}

and find the reachable (from $W_1$) SCCs to be $\{c,d,i\}$. The only cycles therein are $d,e$ and $d,e,i$. Both are good for Player 1. 
\end{example}
\subsection{Game-theoretic characterization}

The game $G(T,\eta)$ associated with the LTS $T$ and $\eta$ as above
is the defined as follows.
Positions are pairs $(s,\phi)$ where $\FV(\phi) \subseteq  \mathit{dom}(\eta)$ and $s\in S$\@. 
In positions of the form $(s,\psi)$ where $\psi$ starts with $\Lor$ or $\may{a}$, it is
proponent's (Player 0) turn. The possible moves (for proponent to choose from)
are:
   \begin{eqnarray*}
(s, \psi_1 \Lor \psi_2) &\leadsto& (s, \psi_1) \\
(s,\psi_1 \Lor  \psi_2) &\leadsto& (s, \psi_2) \\
     (s,\may{a}\psi) &\leadsto& (t,\psi) ~\mathit{where}~(s~\stackrel{a}{\longrightarrow}~t) \in T\mbox{.}
   \end{eqnarray*}
In positions of the form $(s,\psi)$ where $\psi$ starts with $\Land$ or $\must{a}$ it is
the opponent's turn. The possible moves (for opponent to choose from)
 are: 
 \begin{eqnarray*}
  (s,\psi_1 \Land \psi_2)   &\leadsto& (s, \psi_1)  \\
    (s,\psi_1 \Land \psi_2) &\leadsto& (s, \psi_2) \\
            (s,\must{a}\psi)  &\leadsto& (t,\psi) ~\textit{where}~(s~\stackrel{a}{\longrightarrow} t) \in T\mbox{.}
 \end{eqnarray*}
In positions of the form $(s,\Q X.\phi)$ the proponent is to move, but
there is only one possible move:
 \begin{eqnarray*}
(s,\mu X.\phi) &\leadsto& (s,\phi[X := \mu X.\phi]) \\
(s,\nu X.\phi) &\leadsto& (s,\phi[X := \nu X.\phi])
 \end{eqnarray*}

In positions of the form $(s,X)$, $(s,P)$, the proponent is to move, but
there is only one possible move: $(s,X)$ (respectively $(s,P)$) itself. 
So, de facto, the game ends in such a position.

Each position $(s,\phi)$ is assigned a natural number $\Omega(s,\phi)$, its
priority, as follows:
\begin{eqnarray*}
\Omega(s,\mu X.\phi) &=& 2 * \nd(\mu X.\phi) + 1  \\
\Omega(s,\nu X.\phi) &=& 2 * \nd(\nu X.\phi)\\
\Omega(s,P) &=& 0 ~~\mathit{if}~P \in T(s)\\
\Omega(s,P) &=& 1  ~~\mathit{if}~P \not\in T(s)\\
\Omega(s,X) &=& 0  ~~\mathit{if}~s \in \eta(X)\\
\Omega(s,X) &=& 1  ~~\mathit{if}~s \not\in \eta(X)\\
\Omega(s,\phi) &=& 0 ~~\mathit{in~all~other~cases}. 
\end {eqnarray*}


For any position $(s,\phi)$ we can consider the subgame consisting
of the positions reachable from $(s,\phi)$\@. 
Even if $T$ is infinite this subgame has only finitely many priorities because
the only second components occurring in reachable positions are subformulas of 
$\phi$ and one-step unwindings thereof. This subgame is therefore a parity
game to which the previous section applies. 
\begin{example}\label{noten}
Let $\phi = \mu X.\, P  \Lor \may{a}X $ which 
asserts that a state where $P$ is true can be reached.

Define the transition system $T$ by $|T|=\{s,t\}$ and
$\stackrel{a}{\longrightarrow}_T = \{(s,s), (s,t), (t,t)\}$ and
$T(s)=\emptyset$ and $T(t)=\{P\}$.  The associated game graph is as
follows 

\medskip

\begin{picture}(0,0)%
\includegraphics[scale=0.5]{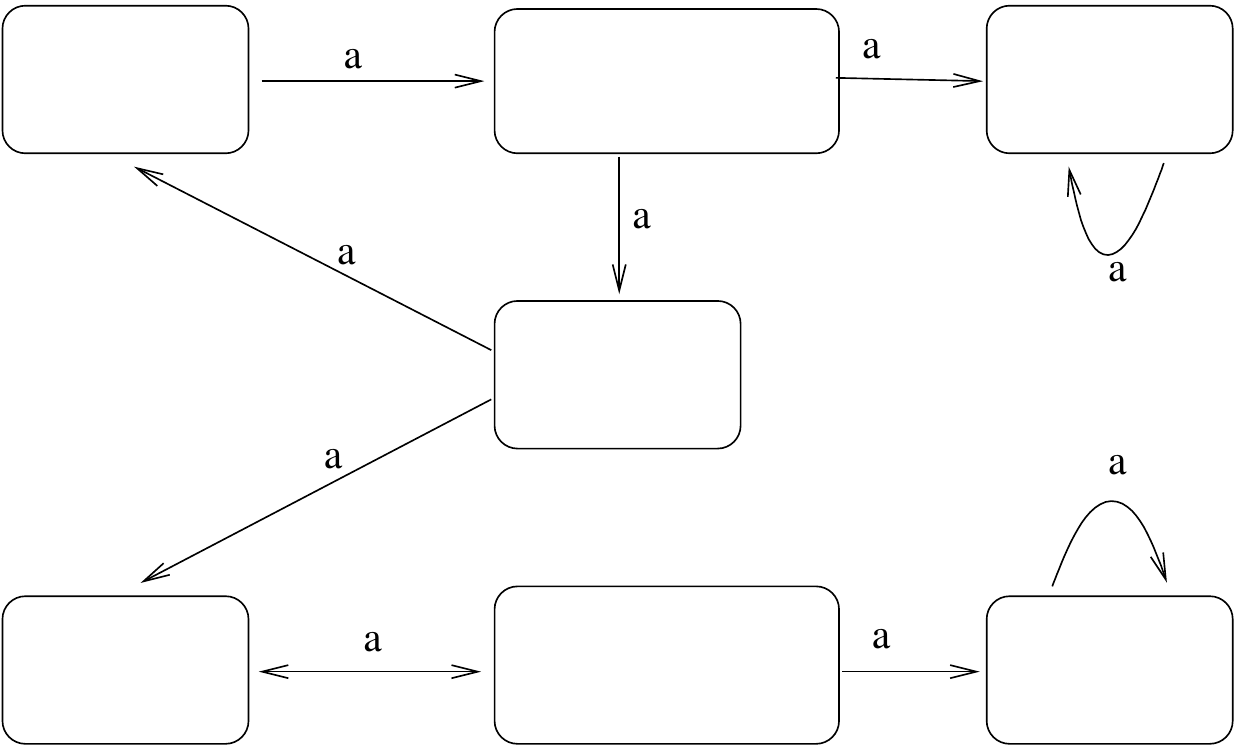}%
\end{picture}%
\setlength{\unitlength}{2072sp}%
\begingroup\makeatletter\ifx\SetFigFont\undefined%
\gdef\SetFigFont#1#2#3#4#5{%
  \reset@font\fontsize{#1}{#2pt}%
  \fontfamily{#3}\fontseries{#4}\fontshape{#5}%
  \selectfont}%
\fi\endgroup%
\begin{picture}(5649,3399)(1564,-3898)
\put(3901,-3616){\makebox(0,0)[lb]{\smash{{\SetFigFont{10}{24.0}{\rmdefault}{\mddefault}{\updefault}{\color[rgb]{0,0,0}$P\Lor\may{a}\phi,t$}%
}}}}
\put(6436,-3601){\makebox(0,0)[lb]{\smash{{\SetFigFont{10}{24.0}{\rmdefault}{\mddefault}{\updefault}{\color[rgb]{0,0,0}$P,t$}%
}}}}
\put(6421,-916){\makebox(0,0)[lb]{\smash{{\SetFigFont{10}{24.0}{\rmdefault}{\mddefault}{\updefault}{\color[rgb]{0,0,0}$P,s$}%
}}}}
\put(3886,-916){\makebox(0,0)[lb]{\smash{{\SetFigFont{10}{24.0}{\rmdefault}{\mddefault}{\updefault}{\color[rgb]{0,0,0}$P\Lor\may{a}\phi,s$}%
}}}}
\put(1816,-3616){\makebox(0,0)[lb]{\smash{{\SetFigFont{10}{24.0}{\rmdefault}{\mddefault}{\updefault}{\color[rgb]{0,0,0}$\phi,t$}%
}}}}
\put(1786,-931){\makebox(0,0)[lb]{\smash{{\SetFigFont{10}{24.0}{\rmdefault}{\mddefault}{\updefault}{\color[rgb]{0,0,0}$\phi,s$}%
}}}}
\put(4006,-2266){\makebox(0,0)[lb]{\smash{{\SetFigFont{10}{24.0}{\rmdefault}{\mddefault}{\updefault}{\color[rgb]{0,0,0}$\may{a}\phi,s$}%
}}}}
\end{picture}%

\medskip 

The priorities of the positions labelled $(\phi,s), (\phi,t), (P,s)$ are 1; the priorities of the three other positions are 0. 

Player 0 wins from every position except $(P,s)$. The winning strategy
moves to $(\may{a}\phi,s)$ and then $(\phi,t)$ and then $(P,t)$. Note
that a strategy that moves from $(P\Lor \may{a}\phi,s)$ to $(\phi,s)$
looses even though it never leaves the winning set $W_0$. Thus, in
order to compute winning strategies it is not enough to choose any
move that remains in the winning set.
\end{example}

\begin{theorem}\label{theorem.game.characterization}
If $s \in \Sem{\phi}{\eta}$ then proponent wins $G(T,\eta)$ from $(s,\phi)$\@.
\end{theorem}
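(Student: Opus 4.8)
The plan is to construct an explicit strategy $\sigma$ for proponent (in fact positional) and then verify that it wins every play from $(s,\phi)$. The strategy is driven by the semantic invariant that every position $(s',\psi')$ reached along a $\sigma$-play satisfies $s' \in \Sem{\psi'}{\eta}$. Maintaining this invariant is routine casework on the shape of $\psi'$: at a $\Lor$- or $\may{a}$-position proponent picks a successor that still lies in the denotation (one exists by the clauses for $\cup$ and $\may{a}$); at a $\Land$- or $\must{a}$-position every opponent move stays in the denotation (by the clauses for $\cap$ and $\must{a}$); and at a fixpoint position $(s',\Q X.\chi)$ the unique unfolding move preserves the invariant by the unfolding Proposition $\Sem{\Q X.\chi}{\eta} = \Sem{\chi[X:=\Q X.\chi]}{\eta}$. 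Whenever a play reaches a terminal position $(s',X)$ or $(s',P)$, the invariant forces $s'\in\eta(X)$ respectively $P\in T(s')$, so the priority there is $0$ and that play is won by proponent.

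The invariant alone does not guarantee a win, as Example~\ref{noten} shows: proponent may cycle forever through a least fixpoint without ever leaving the denotation. To rule this out I refine the choices at $\Lor$- and $\may{a}$-positions by the approximation stage of the least fixpoints. For each $\mu$-subformula $\mu X.\chi$, write its denotation as the union of the ordinal approximants $F^{\alpha}(\emptyset)$ with $F(U)=\Sem{\chi}{\eta[X:=U]}$, and for $s'\in\Sem{\mu X.\chi}{\eta}$ let $\mathrm{rk}(s')$ be the least $\alpha$ with $s'\in F^{\alpha}(\emptyset)$; this $\alpha=\beta+1$ is a successor with $s'\in\Sem{\chi}{\eta[X:=F^{\beta}(\emptyset)]}$. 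Proponent plays so as to witness membership at exactly this stage. Consequently, when a $\sigma$-play descends from an unfolding of $\mu X.\chi$ back to a (substituted) occurrence of its bound variable $X$, membership is witnessed against the strictly smaller approximant $F^{\beta}(\emptyset)$, so the state reached there has strictly smaller rank.

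It remains to analyse an infinite $\sigma$-play. Its second components are subformulas of $\phi$ and their one-step unwindings, so only finitely many fixpoint formulas occur and some nonempty set of them is unfolded infinitely often. Let $\Q X.\chi$ be one of maximal nesting depth $d$ among these; by the priority assignment, its priority $2d$ (if $\Q=\nu$) or $2d+1$ (if $\Q=\mu$) is exactly the largest priority seen infinitely often, so proponent wins if and only if $\Q=\nu$. Suppose instead $\Q=\mu$. Each reappearance of a position $(\cdot,\mu X.\chi)$ can arise only by reaching a substituted occurrence of the bound variable $X$ inside $\chi$; by Lemma~\ref{nedele} the syntactic descent to that occurrence crosses only fixpoints of nesting depth strictly below $d$, which are therefore entered afresh and carry no rank obligation for $\mu X.\chi$. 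Hence, by the previous paragraph, the ranks of the successive $(\cdot,\mu X.\chi)$-positions form a strictly decreasing sequence of ordinals, which is impossible. Therefore $\Q=\nu$, the largest priority occurring infinitely often is even, and $\sigma$ wins.

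The technical heart, and the step I expect to be hardest to get exactly right, is this last one: justifying that between two consecutive recurrences of the dominant least fixpoint its approximation rank must strictly drop. This is precisely where Lemma~\ref{nedele} is indispensable, since it guarantees that the fixpoints of smaller nesting depth met on the way (whose own ranks may legitimately grow) cannot spoil the intended well-founded descent on the rank of the dominant fixpoint. Making this rank bookkeeping precise for arbitrarily nested fixpoints is the only non-routine part; the remaining cases are immediate from the semantic clauses and the unfolding Proposition.
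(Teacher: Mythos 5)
Your argument is correct in outline, but it is not the paper's argument: you have reconstructed the classical signature proof (in the style of Streett--Emerson and Bradfield--Stirling), which the paper explicitly sets out to avoid. You build one global strategy and analyse infinite plays, using ordinal approximation ranks for $\mu$-subformulas and a well-founded descent on the rank of the dominant least fixpoint, with Lemma~\ref{nedele} keeping the shallower fixpoints from interfering. The paper instead proceeds by structural induction on $\phi$ and needs no ordinals at all: in the case $\mu X.\psi$ it defines $U$ as the set of states from which proponent wins from $(t,\mu X.\psi)$, shows $\Sem{\psi}{\eta[X\mapsto U]}\subseteq U$ via the induction hypothesis, and concludes $\Sem{\mu X.\psi}{\eta}\subseteq U$ by least-prefixpoint induction; the strategy plays the inductively given strategy for $\psi$ and, upon first reaching some $(t',\mu X.\psi)$, switches once and for all to the winning strategy that $t'\in U$ supplies --- no rank bookkeeping whatsoever. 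Lemma~\ref{nedele} is used only in the $\nu$ case, where restarting at each regeneration makes the even priority of $\nu X.\psi$ the largest one occurring infinitely often. Your route buys a single explicit strategy and a concrete picture of why plays are won; the paper's route buys compositionality (structural induction on formulas rather than a global analysis of all fixpoints) and complete freedom from ordinals and signatures, which is precisely the novelty it claims.

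One step in your sketch genuinely needs repair: the assertion that every reappearance of a position $(\cdot,\mu X.\chi)$ can only arise through the substituted occurrence of the bound variable $X$ is false in general. A play can also re-enter $\mu X.\chi$ afresh by descending from an enclosing fixpoint formula that contains it as a subformula, and such a re-entry resets your rank, breaking the strict descent. To close this, you must argue that in the tail of the play this cannot happen for the dominant formula: the play can leave the scope of $\mu X.\chi$ and later return only through a variable of $\mu X.\chi$ that is bound further out, and by the clause $\nd(X,\Q Y.\phi)=\max(\nd(\Q Y.\phi),\nd(X,\phi))$ underlying Lemma~\ref{nedele} the corresponding enclosing fixpoint then has strictly larger nesting depth and would recur infinitely often, contradicting the maximality of $d$. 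With that case ruled out, your rank-descent argument (and hence the whole proof) goes through, though making the rank well-defined under the substitutions of outer fixpoint variables still requires the simultaneous-signature bookkeeping you yourself flag as the non-routine part.
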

Before proving this, we note that the converse is in this case actually a relatively simple consequence. 
\begin{corollary}\label{corollary.game.characterization}
If proponent wins $G(T,\eta)$ from $(s,\phi)$ then $s \in \Sem{\phi}{\eta}$\@. 
\end{corollary}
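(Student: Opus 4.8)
The plan is to read the corollary as the converse of Theorem~\ref{theorem.game.characterization} and to obtain it from that theorem by combining determinacy (Theorem~\ref{theorem.positional.wins}) with the syntactic duality of Lemma~\ref{lemma:dual}. So suppose proponent wins $G(T,\eta)$ from $(s,\phi)$, and assume towards a contradiction that $s \notin \Sem{\phi}{\eta}$. By Lemma~\ref{lemma:dual} this gives $s \in \Sem{\phi^*}{\eta'}$, where $\eta'(X) = S \setminus \eta(X)$; applying Theorem~\ref{theorem.game.characterization} to the formula $\phi^*$ and the environment $\eta'$ then shows that proponent also wins $G(T,\eta')$ from $(s,\phi^*)$. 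The goal is to convert this last strategy into a strategy for the \emph{opponent} in the original game $G(T,\eta)$ from $(s,\phi)$: since by Theorem~\ref{theorem.positional.wins} a single position cannot lie in both $W_0$ and $W_1$, such an opponent win contradicts the hypothesis, and the corollary follows.

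The heart of the argument is therefore a player-swapping isomorphism between (the reachable parts of) $G(T,\eta)$ and $G(T,\eta')$. I would define it on positions by $(t,\psi) \mapsto (t,\psi^*)$; as $*$ is an involution with $\phi^{**}=\phi$ this is a bijection. First I would check that it transports moves to moves: dualization turns $\Lor$ into $\Land$ and $\may{a}$ into $\must{a}$ and back, so it exchanges proponent's and opponent's positions, while the unique moves at fixpoint, variable and proposition positions are matched up because $*$ commutes with one-step unwinding, i.e. $(\psi[X := \mu X.\psi])^* = \psi^*[X := \nu X.\psi^*]$ and symmetrically for $\nu$. Consequently every play $\pi$ of $G(T,\eta)$ corresponds under $*$ to a play $\pi^*$ of $G(T,\eta')$ in which the two players have traded roles, so that a positional strategy for proponent in the dual game pulls back to a positional strategy for opponent in the original one.

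It then remains to show that the correspondence flips the \emph{winner} of every play, which is exactly what converts ``proponent wins $\pi^*$'' into ``opponent wins $\pi$''. For a play that gets stuck at a variable leaf $(t,X)$ this is immediate from the definition of $\eta'$, since the leaf has priority $0$ in one game precisely when it has priority $1$ in the other. For an infinite play the decisive observation is that every non-fixpoint, non-leaf move strictly decreases the size of the second component, so an infinite play must unwind some fixpoint infinitely often; hence the largest priority occurring infinitely often is realised at a fixpoint position $(t,\Q Y.\chi)$, and there $\Omega$ flips parity under $*$ because $\nd(\Q Y.\chi) = \nd((\Q Y.\chi)^*)$ while the dual pair $\mu$ (odd) and $\nu$ (even) is interchanged. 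Thus the dominant priority of $\pi$ is even iff that of $\pi^*$ is odd, i.e. opponent wins $\pi$ iff proponent wins $\pi^*$.

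I expect the delicate point to be precisely this parity bookkeeping at the leaves, together with the verification that priority $0$ can never dominate a genuinely infinite play; the proposition leaves in particular have to be matched up consistently with the reading of duality used in Lemma~\ref{lemma:dual}, and it is there that one must ensure that the duality on the semantics and the duality on the game agree. Once the player-swapping isomorphism and the winner-flip are in place, the corollary drops out: proponent winning $G(T,\eta')$ from $(s,\phi^*)$ yields an opponent win in $G(T,\eta)$ from $(s,\phi)$, contradicting the hypothesis via Theorem~\ref{theorem.positional.wins}, so $s \in \Sem{\phi}{\eta}$ after all.
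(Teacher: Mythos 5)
Your proof is correct and takes essentially the same route as the paper: assume $s \notin \Sem{\phi}{\eta}$, apply Lemma~\ref{lemma:dual} and Theorem~\ref{theorem.game.characterization} to obtain a proponent win in $G(T,\eta')$ from $(s,\phi^*)$, and contradict determinacy (Theorem~\ref{theorem.positional.wins}). The only difference is that you spell out the player-swapping isomorphism $(t,\psi)\mapsto(t,\psi^*)$ and the parity bookkeeping that the paper compresses into ``it is easy to see,'' which is a faithful elaboration rather than a different argument.
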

\begin{proof}
Suppose that proponent wins $G(T,\eta)$ from $(s,\phi)$ and $s \not\in \Sem{\phi}{\eta}$\@.
We then have $s \in \Sem{\phi^*}{\eta'}$ using Lemma~\ref{lemma:dual} for the
formal dualisation for formulas and complementation for environments\@.
Thus, by the theorem, proponent wins $G(T,\eta')$ from $(s,\phi^*)$\@. 
However, it is easy to see that a winning strategy for proponent in $G(T,\eta')$ from
$(s,\phi^*)$ is tantamount to a winning strategy for opponent in $G(T,\eta)$
from $(s,\phi)$\@; so we get a contradiction using theorem~\ref{theorem.positional.wins}\@. 
\end{proof}
\begin{proof}[of Theorem~\ref{theorem.game.characterization}]
The proof of Theorem~\ref{theorem.game.characterization} now works by
 induction
on $\phi$\@. The cases where $\phi$ is a formula with an outermost fixpoint are the interesting ones. 

In case $\phi$ is of the form $\mu X.\psi$\@, 
define 
  $$U := \{t \,|\, \mathit{proponent~wins~} G(T,\eta) \mathit{~from~}  (t,\mu X.\psi)\}.$$  
We must show that $\Sem{\phi}{\eta} \subseteq U$\@. 
By definition of $\Sem{\phi}{\eta}$ it suffices to show that $\Sem{\psi}{\eta[X \mapsto U]} \subseteq U$\@.
Thus, suppose that $t \in \Sem{\psi}{\eta[X \mapsto U]}$\@.
By the induction hypothesis this means that proponent wins $G(T,\eta[X \mapsto U])$ from $(t,\psi)$\@.
Call the corresponding winning strategy $\sigma$\@.  
We should prove that proponent also wins from $(t,\mu X.\psi)$\@. 
We move to $(t,\psi[X:= \mu X.\psi])$ and play
according to $\sigma$ until we reach a state $(t',\mu X.\psi)$ at which point we
know that $t'\in U$ so we can then continue to play according to the
strategy embodied in that statement. 
Of course, if we never reach such a position then $\sigma$ 
will win the whole game.

In case $\phi$ is of the form $\nu X.\psi$ define $ U := \Sem{\nu
  X.\psi}{\eta}$\@.  We define a winning strategy for positions of the
form $(t,\nu X.\psi)$ where $t\in U$ as follows.  First, we move
(forcedly) to $(t,\psi[X:=\nu X.\psi])$\@.  We know that $t \in
\Sem{\psi}{\eta[X \mapsto U]}$ by unwinding so that, inductively, we
have a strategy that allows us to either win rightaway, or move to
another position $(t',\nu X.\psi)$ where $t' \in U$ and all priorities
encountered on the way are smaller than the one of $\nu X.\psi$ due to
the definition of nesting depth and in particular
Lemma~\ref{nedele}. We start over and unless we eventually do win
rightaway at some point we would have seen the priority of $\nu
X.\psi$ infinitely often which is the largest and even.
\end{proof}
We remark that while the previous result is well-known 
the proof presented here 
is quite different from the ones in the standard literature, e.g.\  \cite{Bradfield01modallogics} 
which uses the order-theoretic concept of 
signatures and are less compositional than ours in the sense that the proof is not directly by structural induction on formulas but rather on the global development of all the fixpoints.

\section{Computing winning strategies via fixpoint iteration}

\subsection{Fixpoint iteration}

It is well-known that the fixpoint iteration in Figure~\ref{fig:fixpoint.iteration} computes $\Sem{\phi}{\eta}$ in the finite case.
Our goal is to somehow instrument this algorithm so that it produces evidence in the form of a winning strategy. 
%
In instrumenting this algorithm to produce evidence in the form of a
winning strategy it is not enough to simply compute the winning sets
using $\SEM(\_,\_)$ and then simply choose moves that do not leave the
winning set. This is because of examples like \ref{noten} which show
that a strategy that never leaves the winning set may nonetheless be
losing.

Instead we will use the construction from the proof of
Theorem~\ref{theorem.game.characterization}\@. Some care needs to be
taken with the exact setup of the typing; in particular, our algorithm
will return partial winning strategies (that win on a subset of the
whole winning set) but only require sets of states (rather than
partial winning strategies) as the values of free variables.

\subsection{Computing winning strategies}

\paragraph{Partial winning strategies.}
A \emph{partial winning strategy} is a partial function $\Sigma$ mapping
positions of the game $G(T,\eta)$ to elements of $S$ extended with $ \{1,2,*\}$; 
it must satisfy the following conditions:
\begin{itemize}
\item[STAR] If $\Sigma(\phi,s) = *$ then all immediate successors of $(\phi,s)$ are in $\Dom(\Sigma)$\@;
\item[OR]   If $\Sigma(\phi,s) = i \in \{1,2\}$ then $\phi$ is of the form  $\phi_1 \Lor \phi_2$ 
                  and $(\phi_i,s)\in\Dom(\Sigma)$\@;
\item[DIA]  If $\Sigma(\phi,s) = s' \in S$ then $\phi$ is of the form 
                      $\may{a}\psi$ and $s\,\stackrel{a}{\longrightarrow}\, s'$ and $(\psi,s') \in \Dom(\Sigma)$\@. 
\item [WIN]  Player $0$ wins from all the positions in $\Dom(\Sigma)$ and the obvious
strategy induced\footnote{
Arbitrary moves outside $\Dom(\Sigma)$ + remove the *-setting.
}
by $\Sigma$ is a winning strategy for Player $0$ from those positions.
\end{itemize}
Note that the empty function is in particular a partial winning strategy. 
To illustrate the notation we describe a (partial) winning strategy for the entire winning set for Example~\ref{noten}: 
   \begin{eqnarray*}
     \Sigma(\phi,s) &=& *  \\
     \Sigma(P \Lor \may{a}\phi,s) &=& 2 \\
     \Sigma(\may{a}\phi,s) &=& t \\
     \Sigma(\phi,t) &=& * \\
     \Sigma(P \Lor \may{a}\phi,t) &=& 1\\
     \Sigma(P,t) &=& *   \mathit{,~and~undefined~elsewhere.}
     \end{eqnarray*}
So, $\Dom(\Sigma) = \{(\phi,s), \ldots, (P,t)\}$ and, indeed, player $0$ wins from all these positions by following the advice given by $\Sigma$\@. 
Of course, $\Sigma'(P,t)=*$ and undefined elsewhere is also a partial
winning strategy albeit with smaller domain of definition.

\paragraph{Updating of winning strategies.}
Suppose that $\Sigma$ and $\Sigma'$ are partial winning strategies. 
A new partial winning strategy $\Sigma+\Sigma'$ with $\Dom(\Sigma + \Sigma')$ is
defined by
  \begin{eqnarray*}
(\Sigma+\Sigma')(\phi,s) & = &
     \mathit{if~} (\phi,s)\in\Dom(\Sigma) \mathit{~then~} \Sigma(\phi,s) \mathit{~else~} \Sigma'(\phi,s).
 \end{eqnarray*}
\begin{lemma}\label{plusq}
$\Sigma+\Sigma'$ is a partial winning strategy and
$\Dom(\Sigma+\Sigma') = \Dom(\Sigma) \cup \Dom(\Sigma') $
\end{lemma}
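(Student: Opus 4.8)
The plan is to dispatch the domain equality and the three local conditions quickly, and then concentrate on the winning condition WIN, which is the only substantive part. The identity $\Dom(\Sigma+\Sigma') = \Dom(\Sigma)\cup\Dom(\Sigma')$ is immediate from the definition of $+$: the value $(\Sigma+\Sigma')(\phi,s)$ is defined exactly when $(\phi,s)$ lies in $\Dom(\Sigma)$ (where it equals $\Sigma(\phi,s)$) or in $\Dom(\Sigma')$ (where it equals $\Sigma'(\phi,s)$). In particular both $\Dom(\Sigma)$ and $\Dom(\Sigma')$ are subsets of $\Dom(\Sigma+\Sigma')$.

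For STAR, OR, and DIA I would observe that these are purely local conditions on a single position together with a membership assertion about one or all of its successors. Given $(\phi,s)\in\Dom(\Sigma+\Sigma')$, the value $(\Sigma+\Sigma')(\phi,s)$ equals either $\Sigma(\phi,s)$ or $\Sigma'(\phi,s)$, so the corresponding condition for $\Sigma$ or $\Sigma'$ applies verbatim and places the relevant successor(s) into $\Dom(\Sigma)$ or $\Dom(\Sigma')$; since both of these are contained in $\Dom(\Sigma+\Sigma')$, the condition holds for $\Sigma+\Sigma'$ as well. Thus all the work is in WIN.

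The key observation for WIN is a closure (\emph{trap}) property: if $\Theta$ is any partial winning strategy and a play sits at a position $(\phi,s)\in\Dom(\Theta)$, then every one-step successor consistent with the strategy induced by $\Theta$ --- the unique move dictated by OR or DIA, the forced move at a fixpoint position, or \emph{any} opponent move at a $\Land$- or $\must{a}$-position --- again lies in $\Dom(\Theta)$. This is exactly what STAR, OR, DIA guarantee: at an opponent position the value must be $*$ (since OR and DIA force a $\Lor$- or $\may{a}$-formula, and a concrete move value is impossible otherwise), so STAR puts all successors in the domain, while at a proponent position the chosen successor is placed in the domain by the applicable clause. Hence a $\Theta$-conforming play that enters $\Dom(\Theta)$ never leaves it, and from a starting position in $\Dom(\Theta)$ the induced play coincides with a genuine winning $\Theta$-play.

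With this in hand I would analyse an arbitrary play that starts at $p_0\in\Dom(\Sigma+\Sigma')$ and follows the strategy induced by $\Sigma+\Sigma'$ against an arbitrary opponent. Because $\Sigma+\Sigma'$ prefers $\Sigma$ wherever $\Sigma$ is defined, at every visited position in $\Dom(\Sigma)$ the play follows $\Sigma$, and by the trap property it then remains in $\Dom(\Sigma)$ forever, following $\Sigma$ exactly; so if the play ever enters $\Dom(\Sigma)$ at some finite stage $k$, its suffix from stage $k$ is a winning $\Sigma$-play. Otherwise the play stays forever in $\Dom(\Sigma')\setminus\Dom(\Sigma)$, where $\Sigma+\Sigma'$ agrees with $\Sigma'$, so it is a winning $\Sigma'$-play from $p_0$. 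In either case Player $0$ wins, using that the parity winning condition depends only on the priorities occurring infinitely often, hence only on the tail of the play, so the harmless finite $\Sigma'$-prefix before the play locks into $\Sigma$ does not affect the outcome. I expect this switching analysis --- pinning down that the play commits irrevocably to one of the two strategies and that the discarded prefix is irrelevant to the parity condition --- to be the main obstacle; the trap property is precisely what makes it go through.
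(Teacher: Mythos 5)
Your proof is correct and takes essentially the same route as the paper's own (one-sentence) argument: a play following $\Sigma+\Sigma'$ eventually commits to one of $\Sigma$ or $\Sigma'$, and initial segments do not affect the parity outcome. Your explicit trap property, the observation that opponent positions must carry the value $*$, and the verification of STAR/OR/DIA and the domain equality merely spell out details the paper leaves implicit.
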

\begin{proof}A play following $\Sigma+\Sigma'$ will eventually remain in one of
$\Sigma$ or $\Sigma'$; this, together with the fact that initial segments do
not affect the outcome of a game implies the claim.
\end{proof}
%
%
\newcommand{\IF}{\mathit{if}}
\newcommand{\THEN}{\mathit{then}}
\newcommand{\ELSE}{\mathit{else}}
\newcommand{\UNDEFINED}{\mathit{undef}}
Let $\phi$ be a formula and let $\Sigma$ be a partial winning strategy
for $G(T,\eta[X \mapsto S])$ such that the mapping $(\rho,s) \mapsto
(\rho[X:=\psi],s)$ is injective on $\Dom(\Sigma)$. Furthermore, let
$\Sigma'$ be a winning strategy for $G(T,\eta)$ such that ${(\phi,s)
  \,|\, s \in S}$ is contained in $\Dom(\Sigma')$\@.
A new partial strategy $\Sigma[X:=\phi,\Sigma']$ is defined by 
   \begin{eqnarray*}
     \Sigma[X:=\phi,\Sigma'] (\rho,s)
     &=&  \IF~ (\rho,s) \in \Dom(\Sigma') ~\THEN~ \Sigma'(\rho,s) ~\ELSE\\ 
     &   & ~~~\IF~\mathit{exists~} \psi, \rho=\psi[X:=\phi]  ~\THEN~ \Sigma(\psi,s) 
     ~\ELSE~\UNDEFINED
   \end{eqnarray*}
\begin{lemma}\label{subse}
Under the assumptions made the strategy $\Sigma[X:=\phi,\Sigma']$ is
   indeed a partial winning strategy for the game $G(T,\eta)$ and
   $\{(\rho[X:=\phi],s) | (\rho,s)\in\Dom(\Sigma)\} \subseteq \Dom(\Sigma[X:=\phi,\Sigma'])$\@.
\end{lemma}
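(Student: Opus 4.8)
The plan is to verify, in turn, that $\Sigma[X:=\phi,\Sigma']$ is well defined, that it satisfies the stated domain inclusion, and finally that it meets the four defining conditions STAR, OR, DIA and WIN of a partial winning strategy; of these only WIN carries a genuine argument. For well-definedness, the injectivity hypothesis on the map $(\rho,s)\mapsto(\rho[X:=\phi],s)$ over $\Dom(\Sigma)$ guarantees that the clause ``exists $\psi$ with $\rho=\psi[X:=\phi]$'' selects at most one $\psi$ with $(\psi,s)\in\Dom(\Sigma)$, so the definition is unambiguous. For the inclusion, given $(\rho,s)\in\Dom(\Sigma)$ the position $(\rho[X:=\phi],s)$ is defined by the first clause if it already lies in $\Dom(\Sigma')$, and otherwise by the second clause with witness $\psi:=\rho$; either way it lies in $\Dom(\Sigma[X:=\phi,\Sigma'])$.

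For STAR, OR and DIA I would argue by cases on which clause produces the value at a domain position. A position handled by the first clause inherits its value from $\Sigma'$, itself a partial winning strategy, and the successors required by STAR/OR/DIA lie in $\Dom(\Sigma')\subseteq\Dom(\Sigma[X:=\phi,\Sigma'])$. For a position $(\rho,s)$ handled by the second clause we have $\rho=\psi[X:=\phi]$ with $(\psi,s)\in\Dom(\Sigma)$ and, crucially, $\psi\neq X$: otherwise $\rho=\phi$ and $(\phi,s)\in\Dom(\Sigma')$ by hypothesis, so the first clause would have fired. Since $\psi\neq X$, capture-avoiding substitution commutes with the outermost connective and with the one-step unwinding of a fixpoint, so $(\rho,s)$ has the same syntactic shape as $(\psi,s)$ and its immediate successors are exactly the $(\beta[X:=\phi],s')$ with $(\beta,s')$ an immediate successor of $(\psi,s)$. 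The conditions for $\Sigma$ at $(\psi,s)$ therefore transfer verbatim, the required successors landing in the domain by the inclusion already established.

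The substance is WIN. Let $\tau$ be the strategy induced by $\Sigma[X:=\phi,\Sigma']$ and let $\pi$ be any $\tau$-play from a domain position. If $\pi$ ever reaches a position of $\Dom(\Sigma')$, then from that point $\tau$ agrees with $\Sigma'$; since an opponent position in $\Dom(\Sigma')$ is necessarily $*$-marked (by OR and DIA it can be neither a $\{1,2\}$- nor an $S$-value), condition STAR keeps all its successors inside $\Dom(\Sigma')$, while a proponent position keeps the chosen successor inside by OR/DIA. Hence $\pi$ remains in $\Dom(\Sigma')$ forever and is won by Player~$0$ because $\Sigma'$ is winning; as a finite prefix does not affect the winner (the observation already used for Lemma~\ref{plusq}), Player~$0$ wins $\pi$. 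Otherwise $\pi$ stays in the second-clause region, so writing its positions as $(\rho_i[X:=\phi],s_i)$, the sequence $\bar\pi=(\rho_0,s_0),(\rho_1,s_1),\dots$ is a play of $G(T,\eta[X\mapsto S])$ consistent with $\Sigma$ that never meets an $X$-leaf $(X,s)$ --- had it done so, $\pi$ would sit at $(\phi,s)\in\Dom(\Sigma')$. Since $\Sigma$ is a winning strategy, Player~$0$ wins $\bar\pi$.

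Transporting this win from $\bar\pi$ to $\pi$ is the step I expect to be the main obstacle, since it requires that corresponding positions carry equal priorities, $\Omega(s_i,\rho_i)=\Omega(s_i,\rho_i[X:=\phi])$ for all $i$. For leaves $P$, and for variables $Y\neq X$, this is immediate, because $\eta[X\mapsto S]$ and $\eta$ agree off $X$ while substitution fixes $P$ and $Y$; the only delicate case is a fixpoint subformula $\Q Y.\chi$, whose priority is $2\,\nd(\Q Y.\chi)+\{0,1\}$, where one must see that substitution leaves the nesting depth unchanged, $\nd(\Q Y.\chi)=\nd(\Q Y.\chi[X:=\phi])$. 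This reduces to the structural fact that $\nd(Y,\chi)$ counts only the fixpoint quantifiers on paths to free occurrences of $Y$, which are untouched when $\phi$ is capture-avoidingly substituted at the free occurrences of the distinct variable $X$. Granting this invariance, the priority sequences of $\pi$ and $\bar\pi$ coincide, so the largest priority occurring infinitely often is the same and even, and $\pi$ is won by Player~$0$. This establishes WIN and completes the proof.
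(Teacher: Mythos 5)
Your proof is correct and follows essentially the same route as the paper's (two-sentence) argument: well-definedness from the injectivity of $(\rho,s)\mapsto(\rho[X:=\phi],s)$, and the dichotomy that a play following $\Sigma[X:=\phi,\Sigma']$ either stays in the image of $\Sigma$ forever or reaches a position $(\phi,s)$ --- the image of an $(X,s)$ leaf --- where $\Sigma'$ takes over. You go beyond the paper in one welcome respect: you make explicit, and correctly discharge, the priority-transfer obligation $\Omega(s,\rho)=\Omega(s,\rho[X:=\phi])$, which reduces to the invariance of nesting depth under capture-avoiding substitution for the distinct variable $X$; the paper leaves this step entirely implicit, so your elaboration fills a genuine (if routine) gap rather than deviating from the intended argument.
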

\begin{proof}Injectivity of the substitution $[X:=\phi]$ shows that
$\Sigma[X:=\phi,\Sigma']$ is well defined. A game according to $\Sigma[X:=\Sigma']$
starting from $\rho[X:=\phi]$ either stays completely in $\Sigma$ or else
reaches one of the positions $(X,s)$ at which point $\Sigma'$ takes over. 
\end{proof}

\subsection{Computing winning strategies by fixpoint iteration}

\newcommand{\Pws}[2]{\textit{SEM}(#1)_{#2}}  

For any formula $\phi$ and environment $\eta$ with $\Dom(\eta) \supseteq \FV(\phi)$ we define 
a partial winning strategy $\Pws{\phi}{\eta}$ by the following clauses: 
   \begin{eqnarray*}
    \Pws{X}{\eta} 
    & =&  \lambda \rho,s.\, \IF~\rho=X \mathit{~and~} s\in\eta(X)~ \THEN~* ~\ELSE~ \UNDEFINED \\
    \Pws{P}{\eta}  
    & =&  \lambda \rho,s.\, \IF~\rho=X \mathit{~and~} P\in T(s)~ \THEN~* ~\ELSE~ \UNDEFINED \\[2mm]
    \Pws{\phi \Land \psi}{\eta} 
    & = &   \Pws{\phi}{\eta}  \\
    &  +   & \Pws{\psi}{\eta}  \\
    &  +   & \lambda \rho,s.\, \IF~ \rho=\phi\Land \psi \mathit{~and~} 
                                                    (s,\phi) \in\Dom(\Pws{\phi}{\eta}) \mathit{~and~} 
		      	                                 (s,\psi) \in \Dom(\Pws{\phi}{\eta}) \\
     &     & ~~~~~~~~~~~~~~~\THEN~ *~\ELSE~\UNDEFINED \\[2mm]
     \Pws{\phi \Lor \psi}{\eta} 
     & = &  \Pws{\phi}{\eta} \\
     &  +  & \Pws{\psi}{\eta}  \\
     &   + &  \{(\phi \Lor \psi,s) \mapsto 1 \, |\,  (\phi,s) \in \Dom(\Pws{\phi}{\eta})\} \\
     &   +  &  \{(\phi \Lor \psi,s) \mapsto 2 \, |\,  (\psi,s) \in \Dom(\Pws{\phi}{\eta})\} \\[2mm]
    \Pws{\must{a}\phi}{\eta} 
    & = &  \Pws{\must{a}\phi}{\eta} \\
    \Pws{\may{a}\phi}{\eta} 
    & = &  \Pws{\phi}{\eta} \\
    &  +   &  \{(\may{a}\phi,s) \mapsto (\phi,s') \,|\, (\phi,t)\in\Dom(\Pws{\phi}{\eta}),
                                                                    s~\stackrel{a}{\longrightarrow}~s'\}\\[2mm]
 \end{eqnarray*}
 \begin{eqnarray*}
     \Pws{\mu X.\phi}{\eta} 
     & = & \Sigma_k \mathit{,~where} \\
      &    & ~~\Sigma_0 = 0;\\
      &    & ~~\Sigma_{n+1} = \Pws{\phi}{[\eta[X \mapsto \{s \,|\, (\phi,s) \in \Dom(\Sigma_n)\}]]}
                                                   [X:=\Sigma_n] \\[2mm]
%
%
%
      \Pws{\nu X.\phi}{\eta} 
       & = & \Sigma' \mathit{,~where} \\
	 &    & ~~U = \Sem{\nu X.\phi}{\eta}  \\
       &    & ~~ \Sigma =\Pws{\phi}{\eta[X \mapsto U]} \\
       &    & ~~ \mathit{and~}  \Sigma' \mathit{~is~obtained~from~} \Sigma \mathit{~by} \\
       &    & ~~~~\mathit{1.~removing~all~positions~} (X,s) \\
       &    & ~~~~\mathit{2. ~redirecting~all~edges~leading~into~} (X,s) \mathit{~into~} (\nu X.\phi,s) \\
       &    & ~~~~\mathit{3.~substituting~} X \mathit{~by~} \nu X.\phi 	
 \end{eqnarray*}
The following Lemma and Theorem are now immediate from these definitions and the Lemmas~\ref{plusq} and \ref{subse}\@.	
\begin{lemma}	
  $  \{s \,|\, (\phi,s) \in \Dom(\Pws{\phi}{\eta})\} ~=~ \Sem{\phi}{\eta} $
\end{lemma}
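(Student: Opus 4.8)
The plan is to prove the identity by structural induction on $\phi$, reading off each case from the defining clauses of $\Pws{\phi}{\eta}$ and matching it against the corresponding clause of $\Sem{\phi}{\eta}$. Throughout write $D(\phi,\eta) := \{s \mid (\phi,s) \in \Dom(\Pws{\phi}{\eta})\}$. The base cases $\phi = X$ and $\phi = P$ are immediate: the defining clauses place $(\phi,s)$ into the domain exactly when $s\in\eta(X)$, respectively $P\in T(s)$, so $D(X,\eta)=\eta(X)=\Sem{X}{\eta}$ and $D(P,\eta)=\Sem{P}{\eta}$.

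For the Boolean and modal connectives the defining equations present $\Pws{\phi}{\eta}$ as a sum, in the sense of the operation $+$, of the strategies for the immediate subformulas together with one explicit clause that inserts the root positions $(\phi,s)$. Since the summand sub-strategies carry only positions whose formula component lies below $\phi$, no sub-strategy contributes a root position, and by Lemma~\ref{plusq} the domain of a sum is the union of the domains; hence the states $s$ with $(\phi,s)\in\Dom(\Pws{\phi}{\eta})$ are exactly those named by the explicit clause. Reading it off yields $D(\phi_1\Lor\phi_2,\eta)=D(\phi_1,\eta)\cup D(\phi_2,\eta)$, $D(\phi_1\Land\phi_2,\eta)=D(\phi_1,\eta)\cap D(\phi_2,\eta)$, the preimage of $D(\psi,\eta)$ for $\may{a}\psi$, and its weakest precondition for $\must{a}\psi$. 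Applying the induction hypothesis to the subformulas and comparing with the matching semantic clauses closes each of these cases.

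The fixpoint cases are the crux. For $\nu X.\phi$ I would set $U := \Sem{\nu X.\phi}{\eta}$ and use that $U$ is a genuine fixpoint, so $\Sem{\phi}{\eta[X \mapsto U]} = U$; the induction hypothesis then gives $D(\phi,\eta[X \mapsto U]) = U$, and the winning $X$-leaves of the sub-strategy are precisely the positions $(X,s)$ with $s\in U$. Steps 1--3 of the definition remove these leaves, redirect every edge entering $(X,s)$ into $(\nu X.\phi,s)$, and rename $X$ to $\nu X.\phi$; thus $(\nu X.\phi,s)$ enters $\Dom(\Sigma')$ for exactly the $s\in U$, giving $D(\nu X.\phi,\eta)=U=\Sem{\nu X.\phi}{\eta}$. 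For $\mu X.\phi$ I would run an inner induction on the iteration index $n$ with $F(V) := \Sem{\phi}{\eta[X \mapsto V]}$, showing that the set of states whose (substituted) body position lies in $\Dom(\Sigma_n)$ equals $F^n(\emptyset)$: the outer induction hypothesis applied with the environment $\eta[X \mapsto F^{n-1}(\emptyset)]$ supplies the value $F(F^{n-1}(\emptyset))=F^n(\emptyset)$, Lemma~\ref{subse} controls how the substitution $[X := \mu X.\phi,\Sigma_{n-1}]$ transports positions into the new strategy, and Lemma~\ref{plusq} shows the contribution inherited from $\Sigma_{n-1}$ is absorbed by monotonicity of $F$. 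Since $S$ is finite the iteration stabilises at some $k\le|S|$, and the Proposition on the finite case gives $\Sem{\mu X.\phi}{\eta}=F^k(\emptyset)$, whence $D(\mu X.\phi,\eta)=\Sem{\mu X.\phi}{\eta}$.

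The main obstacle I anticipate is the position-bookkeeping in the $\mu$ case. One must track three kinds of positions simultaneously---the fixpoint root $(\mu X.\phi,s)$, the unwound body root $(\phi[X:=\mu X.\phi],s)$, and the loop-back leaves $(X,s)$---because the substitution operation of Lemma~\ref{subse} carries leaves to fixpoint roots, and there is a one-step offset between the body approximants and the root positions. Checking that this offset vanishes exactly at the stabilisation index $k$, so that the set attached to the root positions coincides with $F^k(\emptyset)=\Sem{\mu X.\phi}{\eta}$, is the one genuinely delicate point; the remaining verifications are the routine unfoldings signalled by the paper's remark that the statement is immediate from the definitions and Lemmas~\ref{plusq} and~\ref{subse}.
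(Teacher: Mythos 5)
Your proof is correct and is exactly the argument the paper leaves implicit: the paper offers no proof at all, declaring the lemma ``immediate from these definitions and the Lemmas~\ref{plusq} and \ref{subse}'', and your structural induction---reading the explicit root-position clauses off via Lemma~\ref{plusq} for the Boolean and modal cases, using the genuine-fixpoint property of $U$ for $\nu$, and running an inner induction on the iteration index against $F^n(\emptyset)$ with Lemma~\ref{subse} for $\mu$---is precisely how that remark is meant to be expanded. The bookkeeping offset you flag in the $\mu$ case is a real wrinkle in the paper's definition, and it is resolved exactly as you indicate, at the stabilisation index.
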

\begin{theorem}
$\Pws{\phi}{\eta}$ is a winning strategy for $G(\phi,\eta)$\@.
\end{theorem}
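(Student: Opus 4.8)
The plan is to prove the theorem by structural induction on $\phi$, establishing simultaneously the accompanying Lemma (that $\{s \mid (\phi,s)\in\Dom(\Pws{\phi}{\eta})\} = \Sem{\phi}{\eta}$) and the four conditions STAR, OR, DIA, WIN that make $\Pws{\phi}{\eta}$ a partial winning strategy. The two claims must be proved together, because the domain information is exactly what certifies that the freshly introduced positions in each defining clause have all their relevant successors already in the domain. In each case the conditions STAR, OR, DIA are a purely syntactic bookkeeping check on the clause, so the genuine content always lies in WIN, i.e.\ that following the advice of $\Pws{\phi}{\eta}$ forces every play to have even largest infinitely-occurring priority. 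The whole argument is the algorithmic incarnation of the proof of Theorem~\ref{theorem.game.characterization}, and, as the paper remarks, is driven by Lemmas~\ref{plusq} and~\ref{subse}.

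For the base cases $\phi=X$ and $\phi=P$ the strategy is supported only on the terminal positions $(X,s)$ with $s\in\eta(X)$, respectively $(P,s)$ with $P\in T(s)$, which have priority $0$; the unique self-looping play from such a position sees only priority $0$, so WIN and STAR are immediate. For the Boolean and modal cases the recursive part is assembled with $+$, so Lemma~\ref{plusq} already guarantees that this part is a partial winning strategy with the union domain, and it remains to check the finitely many added entries. For $\Land$ and $\must{a}$ these are opponent positions of priority $0$ all of whose successors lie (by the induction hypothesis on the domain) in the assembled domain and are therefore winning, so WIN is preserved whatever the opponent plays; for $\Lor$ and $\may{a}$ the added entry names a single successor that the induction hypothesis certifies to be winning, verifying OR and DIA and preserving WIN.

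The fixpoint cases carry the real weight. For $\phi=\Mu{X}{\psi}$ I would reason about the iteration $\Sigma_0=\emptyset$, $\Sigma_{n+1}=\Pws{\psi}{\eta[X\mapsto U_n]}[X:=\Sigma_n]$ with $U_n=\{s\mid(\psi,s)\in\Dom(\Sigma_n)\}$. First observe that the domains increase monotonically and, since $S$ is finite and the second components range over the finitely many subformulas of $\Mu{X}{\psi}$ and their one-step unwindings, the sequence stabilises at some $\Sigma_k$. That each $\Sigma_{n+1}$ is a partial winning strategy follows from Lemma~\ref{subse}, applied with substituted formula $\Mu{X}{\psi}$ and takeover strategy $\Sigma_n$, using the outer induction hypothesis for $\Pws{\psi}{\eta[X\mapsto U_n]}$, the inner induction on $n$ for $\Sigma_n$, and injectivity of $\rho\mapsto\rho[X:=\Mu{X}{\psi}]$ to make the update well defined. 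For WIN I would use a rank argument: assign to each position of $\Dom(\Sigma_k)$ the least $n$ with the position already in $\Dom(\Sigma_n)$; every handover effected by the takeover clause of Lemma~\ref{subse} strictly decreases this rank, so any play passes through outer positions $(\Mu{X}{\psi},s)$ only finitely often, and is therefore eventually a tail of a play of some inner winning strategy that never reaches an $X$-leaf. Hence the odd priority $2\,\nd(\Mu{X}{\psi})+1$ occurs only finitely often and Player~$0$ wins. For $\phi=\Nu{X}{\psi}$ I would set $U=\Sem{\Nu{X}{\psi}}{\eta}$, take $\Sigma=\Pws{\psi}{\eta[X\mapsto U]}$, and form $\Sigma'$ by deleting the positions $(X,s)$, redirecting edges into them to $(\Nu{X}{\psi},s)$, and substituting $\Nu{X}{\psi}$ for $X$. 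The induction hypothesis gives that $\Dom(\Sigma)$ projects onto $\Sem{\psi}{\eta[X\mapsto U]}=U$ (by unwinding), so every redirected edge lands in the domain. For WIN, an infinite play either never returns to an $(\Nu{X}{\psi},s)$ position, in which case it is a tail of a play of the winning $\Sigma$, or it revisits such positions infinitely often, in which case Lemma~\ref{nedele} forces every priority seen strictly between two consecutive visits to be smaller than $\Omega(s,\Nu{X}{\psi})=2\,\nd(\Nu{X}{\psi})$, so the largest infinitely-occurring priority is this even number.

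The step I expect to be the main obstacle is precisely WIN in the two fixpoint cases, and in particular making the informal ``descent to an earlier approximation'' (for $\mu$) and ``return to $(\Nu{X}{\psi},s)$ with only smaller priorities in between'' (for $\nu$) fully rigorous; both rest on the non-standard nesting depth and on Lemma~\ref{nedele}, which is exactly what guarantees that the priority of the outermost fixpoint dominates everything encountered on the path down to a bound occurrence, thereby converting ``returns infinitely often'' into ``largest infinitely-occurring priority has the right parity''. A secondary, clerical obstacle is discharging the injectivity hypothesis of Lemma~\ref{subse} at every iteration step and reconciling the minor notational slips in the defining clauses (for instance the $\rho=X$ that should read $\rho=P$ in the $P$-clause, and the $\Dom(\Pws{\phi}{\eta})$ that should read $\Dom(\Pws{\psi}{\eta})$ in the $\Land$- and $\Lor$-clauses); these do not affect the argument but must be pinned down for the induction to go through cleanly.
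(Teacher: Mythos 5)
Your proposal is correct and takes essentially the same route as the paper: the paper dispatches this theorem in one line, calling it immediate from the definitions together with Lemmas~\ref{plusq} and~\ref{subse} and the strategy constructions that mirror the proof of Theorem~\ref{theorem.game.characterization}, and your structural induction (Lemma~\ref{plusq} for the $+$-assembled cases, Lemma~\ref{subse} with a rank on the approximation index for $\mu$, and the Lemma~\ref{nedele} priority argument for $\nu$) is precisely the argument that one-liner leaves implicit. Your notational repairs (the $\rho=P$ clause, the garbled $\must{a}$ and $\Land/\Lor$ domain clauses) correctly identify typos in the paper rather than gaps in your argument.
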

\begin{prop}
$\Pws{\phi}{\eta}$ can be computed in polynomial time in $|S|$ and in 
 space $O(|S|^2 * |{\phi}|^2)$, 
where $|S|$ is the size of the state space and $|\phi|$ the length of the formula $\phi$\@. 
\end{prop}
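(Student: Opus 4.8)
The plan is to read off both bounds directly from the recursive definition of $\Pws{\phi}{\eta}$, treating a partial winning strategy as a finite table indexed by game positions $(\rho,s)$ and analysing separately (a) how many such positions can ever be stored and how large each table entry is, which yields the space bound, and (b) how many times the body of a fixpoint is re-evaluated and how much work each evaluation costs, which yields the time bound. Throughout I would lean on the observation, already recorded when the game $G(T,\eta)$ was introduced, that the only second components reachable from $(s,\phi)$ are subformulas of $\phi$ together with their one-step unwindings; this is exactly what keeps the index set of the table polynomial.

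For the space bound, first I would argue that there are only $O(|\phi|)$ distinct reachable second components: each of the $O(|\phi|)$ nodes of the syntax tree of $\phi$ determines exactly one, namely the subformula rooted at that node with its free variables bound to the fixpoint formulas that bind them in $\phi$. If formulas are represented with sharing (a node pointer together with the current bindings), then the substitutions carried out in the $\mu$/$\nu$ clauses and in $\Sigma[X:=\phi,\Sigma']$ merely rename positions rather than copy them, so each position $(\rho,s)$ and each stored value (a single element of $S\cup\{1,2,*\}$) occupies $O(|\phi|)$ space, and one partial winning strategy has $O(|S|\,|\phi|)$ entries. The remaining factor of $|S|$ I would attribute to the fixpoint iteration under a conservative accounting that keeps the up to $|S|$ successive approximants $\Sigma_0,\dots,\Sigma_k$ of a $\mu$- (or $\nu$-) computation simultaneously available; this gives $O(|S|^2\,|\phi|)$ entries of $O(|\phi|)$ space each, that is, the stated $O(|S|^2\,|\phi|^2)$ (and can be sharpened to $O(|S|\,|\phi|^2)$ by discarding $\Sigma_n$ once $\Sigma_{n+1}$ has been built).

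For the time bound I would proceed by structural recursion on $\phi$. The Boolean and modal cases ($\Land,\Lor,\must{a},\may{a}$) combine the already-computed sub-strategies using $+$ (Lemma~\ref{plusq}) and, for the modalities, one pass over $\stackrel{a}{\longrightarrow}$ to compute $\Pre{\T{a}}{\cdot}$ or $\Pretilde{\T{a}}{\cdot}$; each such step costs time polynomial in $|S|$ and $|\phi|$. The only source of repetition is the fixpoint iteration, and here I would invoke the fact recorded when the naive fixpoint semantics was introduced, that in the finite case the ascending chain $\emptyset\subseteq F(\emptyset)\subseteq F^2(\emptyset)\subseteq\cdots$ (respectively the descending chain starting from $S$) stabilises after at most $|S|$ steps, so each $\mu$ or $\nu$ triggers at most $|S|$ re-evaluations of its body. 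Nested fixpoints multiply these counts, giving at most $O(|S|^{d})$ body evaluations, where $d$ is the fixpoint-nesting depth of $\phi$; for fixed $\phi$ this is polynomial in $|S|$, and since each body evaluation costs only polynomially much, the whole computation runs in time polynomial in $|S|$.

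The main obstacle, and the step needing the most care, is the space accounting rather than the time bound. Two points must be handled precisely: keeping the representation of the reachable second components shared, so that the repeated substitutions in the $\mu$/$\nu$ clauses and in $\Sigma[X:=\phi,\Sigma']$ (Lemma~\ref{subse}) do not blow the $O(|\phi|)$-per-formula budget, where the injectivity hypothesis of Lemma~\ref{subse} is exactly what guarantees that these substitutions rename positions rather than merge or duplicate them; and controlling how much of the working memory is live at once across the $|S|$ iterations of each fixpoint and across the recursion stack. Once these are pinned down, the claimed $O(|S|^2\,|\phi|^2)$ follows from the entry-count times entry-size estimate above.
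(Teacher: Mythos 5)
Your time bound is fine and is essentially the paper's argument: the computation of $\Pws{\phi}{\eta}$ tracks that of $\SEM(\phi,\eta)$, each fixpoint stabilises within $|S|$ iterations, and nesting multiplies these counts polynomially in $|S|$. The genuine gap is in your space accounting, specifically in where the two extra factors come from. You obtain the second factor of $|S|$ by conservatively keeping all $|S|$ approximants $\Sigma_0,\ldots,\Sigma_k$ of a fixpoint alive, and the second factor of $|\phi|$ as per-entry representation size; the recursion stack --- which you yourself flag as the main obstacle --- never actually enters your arithmetic. But the stack factor is not optional: the clause $\Sigma_{n+1}=\Pws{\phi}{\eta[X\mapsto\cdots]}[X:=\Sigma_n]$ recomputes $\Pws{\phi}{\cdot}$ afresh at each iteration, so while an outer fixpoint holds its current iterate, every inner fixpoint on the current branch holds its own as well, and there can be $O(|\phi|)$ of them. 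If you apply your keep-all-approximants policy at every level, the live memory is $O(|\phi|)$ levels times $|S|$ approximants times $O(|S|\,|\phi|)$ entries times your $O(|\phi|)$ per entry, i.e.\ $O(|S|^2|\phi|^3)$ --- a factor $|\phi|$ over the claim. If instead you discard $\Sigma_n$ once $\Sigma_{n+1}$ is built, as in your proposed sharpening, you get $O(|S|\,|\phi|^3)$, which lies within $O(|S|^2|\phi|^2)$ only under the additional assumption $|\phi|=O(|S|)$. So although you land on the stated figure, neither variant of your accounting actually establishes it; your headline computation counts the approximants of a single fixpoint and silently omits the nesting.

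The paper's accounting is the reverse of yours: its key point is that, just as in ordinary fixpoint iteration, only the result of the last iteration need be remembered (so no factor of $|S|$ comes from approximants at all); it charges $O(|S|^2|\phi|)$ to storing a single partial winning strategy, and multiplies by the recursion-stack length $O(|\phi|)$, i.e.\ $O(|\phi|)$ strategies live at any one time, giving $O(|S|^2|\phi|^2)$. Your count of $O(|S|\,|\phi|)$ positions per strategy and your sharing argument for the reachable second components are both sound and compatible with this. To repair your proof, drop the keep-all policy, make the stack multiplication explicit, and then either adopt the paper's per-strategy size $O(|S|^2|\phi|)$ (entries of size $O(|S|)$) or argue that entries are unit-size indices, in which case keep-last-per-level plus the stack yields $O(|S|\,|\phi|^2)$, comfortably within the claimed bound.
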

\begin{proof}
The computation of $\Pws{\phi}{\eta}$ follows the one of $\SEM(\phi,\eta)$ hence the time bound. 
Just like in the usual implementations of fixpoint iteration one only needs to remember the result of the last iteration. Storing a single partial winning strategy requires space  $O(|S|^2.|\phi|)$ and the recursion stack has length $O(\phi)$ thus requiring us to store $O(\phi)$ partial winning strategies at any one time. This yields the announced space bound. 
\end{proof}



\bibliographystyle{abbrv}
\bibliography{refs}

\end{document}